%
%
\documentclass[a4paper,12pt]{article}

\usepackage{amsfonts,amssymb,amstext,amsbsy,amsopn,amsthm}




\newtheorem{thm}{Theorem}[section]
\newtheorem{lem}[thm]{Lemma}
\newtheorem{cor}[thm]{Corollary}
\newtheorem{prop}[thm]{Proposition}

\newtheorem{defn}[thm]{Definition}
\newtheorem{assu}[thm]{Assumption}

\newtheorem{rem}[thm]{Remark}

\newcommand{\field}[1]{\mathbb{#1}}
\newcommand{\N}{\field{N}}
\newcommand{\R}{\field{R}}

\newcommand{\be}[1]{\begin{equation}\label{e-#1}}  
\newcommand{\ee}{\end{equation}}  


\newcommand{\fx}{x}

\newcommand{\hone}{W^{1,2}}

\newcommand{\lei}{{L^1\ab}}
\newcommand{\lzw}{{L^2\ab}}
\newcommand{\lun}{{L^\infty\ab}}

\newcommand{\A}{{0}}
\newcommand{\B}{{1}}

\newcommand{\ab}{{(\A,\B)}}

\newcommand{\vr}{{\bf r}}

\newcommand{\assuref}[1]{Assumption~\ref{#1}}
\newcommand{\thmref}[1]{Theorem~\ref{#1}}
\newcommand{\corref}[1]{Corollary~\ref{#1}}
\newcommand{\remref}[1]{Remark~\ref{#1}}
\newcommand{\proref}[1]{Proposition~\ref{#1}}

\newcommand{\lemref}[1]{Lemma~\ref{#1}}

\newcommand{\spec}{spec}

\newcommand{\tr}{{\rm tr}}

\DeclareMathOperator*{\essinf}{ess\,inf}
\DeclareMathOperator*{\esssup}{ess\,sup}


\title{\vspace{-2.0cm}A Kohn-Sham system at zero temperature}

\author{
Horia Cornean\\
Department of Mathematical Sciences\\
Aalborg University\\
Fredrik Bajers Vej 7E\\
DK-9220 Aalborg, Denmark\footnotemark[1]
\and
Kurt Hoke\footnotemark[2]\\
Hagen Neidhardt\footnotemark[2]\\
Paul Nicolae Racec\footnotemark[2]\\ 
Joachim Rehberg\footnotemark[2]\\
Weierstrass Institute  for\\
Applied Analysis and Stochastics\\
 Mohrenstr. 39\\
D-10117 Berlin, Germany
}

\date{January 17, 2008}

\begin{document}

\maketitle

\begin{abstract}
\noindent
An one-dimensional Kohn-Sham system for spin particles 
is considered which effectively describes semiconductor 
{nano}structures and which is investigated at zero temperature.
We prove the existence of solutions and derive a priori
estimates. For this purpose we find estimates for eigenvalues of the Schr\"odinger
operator with effective Kohn-Sham potential and obtain $W^{1,2}$-bounds of
the associated particle density operator.
Afterwards, compactness and continuity results allow to apply Schauder's
fixed point theorem. In case of vanishing exchange-correlation potential uniqueness is
shown by monotonicity arguments. Finally, we investigate the behavior of the system 
if the temperature approaches zero.
\end{abstract}

\noindent
{\bf Subject classification:} 34L40, 34L30, 47H05, 81V70\\

\noindent
{\bf Keywords:} Kohn-Sham systems, Schr\"odinger-Poisson systems,
non-linear operators, density operator, zero temperature,
Fermi-Dirac distribution

\footnotetext[1]{cornean@math.aau.dk}
\footnotetext[2]{hoke@wias-berlin.de, neidhard@wias-berlin.de,
  racec@wias-berlin.de, \\rehberg@wias-berlin.de}

\section{Introduction}

Hohenberg and Kohn have shown in \cite{HK1} that the ground state of an
$N$-body quantum system at zero temperature 
is completely determined by the particle density. Nowadays that paper is
considered as the starting point of the so-called density functional
theory. The main  advantage of this approach is that the
description of an $N$-body quantum problem can be reduced to an effective one-body
system. A shortcoming of \cite{HK1} is that only the existence of such a functional,
 depending on the  particle density, was shown, but the functional was not given 
explicitely. In \cite{KS1} 
Kohn and Sham have indicated such functionals which {$N$-electrons} look like
\begin{eqnarray*}
\lefteqn{
\mathcal E[u] = -\frac{\hbar^2}{2m}
                 \sum^N_{n=1}\int |\nabla\varphi_n(\vr)|^2 d^3\vr}\\
& &\hspace{-1.0cm}
-{\frac{q^2}{4 \pi \epsilon_0}} 
 \sum^M_{k=1}\int \frac{Z_k u(\vr)}{|\vr - {\bf R}_k|}d^3\vr +
\frac{q^2}{2}\int\int\frac{u(\vr)u(\vr')}
{{4 \pi \epsilon_0}|\vr-\vr'|}d^3\vr d^3\vr' + \int \varepsilon_{xc}[u](\vr)d^3\vr,
\end{eqnarray*}
where $M$ is the number of positive ions, 
$Z_k$ their {atomic} number, ${\bf R}_k$ the
positions of ions, $q$ is the magnitude of the elementary charge,
{$\epsilon_0$ is the vacuum permittivity} 
 and $\varepsilon_{xc}[u]$ is the so-called
exchange correlation energy density. 
The {particle} density $u$ is given by the expression
\begin{equation}\label{0.1}
u(\vr) := {2} \sum^N_{n=1}|\psi_n(\vr)|^2,
\end{equation}
where {$2$ counts for the spin degeneracy of the particles}
and $\psi_n$ are eigenfunctions satisfying the Kohn-Sham equation
\begin{displaymath}\hspace{-2.0cm}
\left(-\frac{\hbar^2}{2m}\Delta - \frac{q^2}{4 \pi \epsilon_0}
\sum^M_{k=1}\frac{Z_k}{|\vr - {\bf R}_k|} + 
q^2\int\frac{u(\vr')}{4 \pi \epsilon_0|\vr-\vr'|}d^3\vr' + V_{xc}[u](\vr)\right)\psi_n = E_n\psi_n.
\end{displaymath}
By $V_{xc}[u]$ the so-called exchange correlation potential is denoted, which is given by
\begin{displaymath}
V_{xc}[u](\vr):=  \frac{\partial (\varepsilon_{xc}[u](\vr))}{\partial u}.
\end{displaymath}
The potential
\begin{displaymath}
V_0(\vr) := - {\frac{q^2}{4 \pi \epsilon_0}}\sum^M_{k=1}\frac{Z_k}{|\vr - {\bf R}_k|},
\end{displaymath}
which is determined by the positive ions, can be regarded as a given external
potential. The potential
\begin{displaymath}
\varphi(\vr): = -q\int\frac{u(\vr')}{{4 \pi \epsilon_0}|\vr-\vr'|}d^3\vr' 
\end{displaymath}
is nothing else as the solution of the Poisson equation
\begin{equation}\label{0.0}
\Delta \varphi(\vr) = \frac{q u(\vr)}{{\epsilon_0}}.
\end{equation}
So we end up with a Schr\"odinger operator of the form
\begin{equation}\label{0.2}
H_V = -\frac{\hbar^2}{2m}\Delta + V
\end{equation}
with the effective Kohn-Sham potential
\begin{equation}\label{0.3}
V := V_0 + V_{xc}[u] - q\varphi,
\end{equation}
where $\varphi$ obeys the Poisson equation (\ref{0.0}). The model is
very flexible and widely applicable because the exchange correlation term
can be well adapted to a great variety of problems. This is one of 
the reasons why the approach was very successful in the last forty years,
and Kohn was awarded the Nobel prize in 1998 for that idea. 

We note that the Kohn-Sham system is quite similar to the so-called Hartree-Fock
approximation in $N$-body quantum systems \cite{LS1,L1}. 
The main difference is that the exchange correlation term for the Hartree-Fock system
is not local. However, performing the low density limit one obtains
the Hartree-Fock-Slater approximation \cite{Sl1,D1,BLS1}.
In this case the exchange correlation potential is  of the form
$V_{xc}[u](x) = -C_\alpha|u(x)|^\alpha$, $\alpha = 1/3$ in three dimensions (3D). For $\alpha =
2/3$ one gets another interesting approximation which is called the 
Thomas-Fermi correction. Usually, models of that type are summarized
as Schr\"odinger-Poisson-$X^\alpha$ systems, see \cite{BLSS1}, which
fit, of course, into the class of Kohn-Sham systems. In the following 
we do not restrict ourself to Schr\"odinger-Poisson-$X^\alpha$ systems, 
but consider a {larger} class of local and non-local exchange correlation terms
including the $X^\alpha$-models. Generally, we assume that the exchange
correlation is a non-linear mapping acting from the set of densities into the set {of} 
potentials obeying a certain continuity condition.

Note that all these considerations are made at zero temperature. An extension of the
 Hohenberg-Kohn approach to temperatures above zero was proposed by Mermin in \cite{Mer1}. 
He showed that the expression for the particle density then modifies to
\begin{equation}\label{0.4}
u(\vr) := {2}\sum^\infty_{n=1}\frac{1}{1 + e^{\beta(E_n - \mu)}}|\psi_n(\vr)|^2,
\end{equation}
such that
\begin{equation}\label{0.5}
N = \int u(\vr) d^3\vr =  {2} \sum^\infty_{n=1}\frac{1}{1 + e^{\beta(E_n -\mu)}},
\end{equation}
where $\beta := 1/kT$ and $\mu$ is the so called chemical potential.

There are many papers on the numerics of the Kohn-Sham system, but very few
on its mathematical analysis. In case of nonzero temperature and
bounded domains the system was analyzed in
\cite{KR1} and \cite{KR2}, where existence and a priori estimates were
shown. In \cite{PN1} the Schr\"odinger-Poisson-Slater system was investigated for a periodic
external potential $V_0$. The time dynamics of the
Schr\"odinger-Poisson-Slater system was considered in \cite{SS1}.

In the following we are going to investigate the zero and non-zero
temperature Kohn-Sham system with a general exchange correlation potential
for a planar semiconductor {nano}structure. The system reduces essentially to an effective one-dimensional
system. Since for one-dimensional systems the eigenvalues are simple, one avoids in this
way the occupation problem for the last eigenvalue at zero temperature,
if it is degenerated. We show the existence of solutions for such systems at 
non-zero and zero temperature. 
In particular, we prove that the solution is unique, if the exchange correlation potential is
 absent. In the zero temperature case this proof is based on an extension of the monotonicity 
for the negative particle density operator to non-smooth distribution functions like
Fermi-Dirac distribution function at zero temperature, see \cite{KNR1}.
Finally, we prove that the non-zero temperature solutions of the
Kohn-Sham system converge to those for zero temperature as the temperature goes to zero. 

The outline of the paper is as follows:
In Section 2 we derive an expression of the effective one-dimensional
particle density for a planar semiconductor {nano}structure. 
In Section 3 we introduce the mathematical setup of an one-dimensional 
Schr\"o\-dinger-Poisson system and make it
mathematically rigorous. Section 4 is devoted to the existence of
solutions. In Section 5 we prove the uniqueness of solutions, if the correction term is absent. 
Finally, in Section 6 we show the convergence of non-zero temperature
solutions to zero-temperature ones as the temperature goes to zero.

{\bf Notation:} In this paper the system is considered on the domain
$\Omega :=]0,1[$. For this reason we omit for all functional spaces the explicit indication
of this interval; e.g. write $L^1$ instead of $L^1(]0,1[)$ and so on.
We set $W^{1,2}_0 := \{f \in W^{1,2}: f(0)
  = f(1) = 0\}$. The space of antilinear forms on $W^{1,2}_0$ is denoted by $W^{-1,2}$. 
For Banach spaces $X$ and $Z$, we denote by $\mathcal B(X;Z)$ the space
of all linear, continuous operators from $X$ into $Z$. If $X=Z$ we write
$\mathcal B(X)$. Because of the numerous use of $X=L^2$ we introduce the abbreviation 
$\|\cdot\|=\|\cdot\|_{\mathcal{B}(L^2)}$.\\

\section{Particle density for planar nanostructures}

We consider a planar semiconductor nanostructure; that is, there is a sequence of layers
of different materials along the $x$-direction (i.e. a sequence of quantum wells and barriers)
 embedded between two thick layers of isolator. Then the wave functions of a particle 
(electron or hole) are given by
\begin{equation}\label{e-Psi_planar}
\Psi_{{\textbf k_\perp},l}({\textbf r})=\frac{ e^{i {\textbf k_\perp}{\textbf r_\perp}}}{2\pi} 
\psi_l(x),
\end{equation}
and the total energy of the particle is
\begin{equation}
E=\frac{\hbar^2 k_\perp^2}{2m_\perp}+\lambda_l,
\label{tot_en}
\end{equation}
where ${\textbf r_\perp}=(y,z)$ represents the transversal coordinates, 
${\textbf k_\perp}=(k_y,k_z)$ the transversal wave number and 
$m_\perp$ the effective mass in the transversal direction.
The wave functions along the $x$-direction and their corresponding energies
correspond to the one-dimensional  Schr\"odinger operator in the effective
mass-approximation (Ben-Daniel-Duke form)
\begin{equation}\label{e-schr0}
H_V :=  - \frac{\hbar^2}{2} \frac {d}{dx} \left(m^{-1} \frac {d}{dx} \right) + V,
\end{equation}
where $m = m(x)$ is the position dependent effective mass and 
$V$ is an effective Kohn-Sham potential to be specified later on. The embedding isolator layers
 impose homogeneous Dirichlet boundary conditions for the wave function along the $x$-direction,
providing a discrete spectrum of energies, $\lambda_l$, and defining in such 
a way a {\it closed} system in the $x$-direction.

Quantum mechanically, the particle density is given by a sum over all
states of their localization probability multiplied by their occupation
probability. 
For fermions (electrons and holes) the occupation probability
is given by the Fermi-Dirac function. 

At zero temperature, all states up to the Fermi energy $E_F$ 
are equally occupied, with probability $1$
and above $E_F$ all states are empty, i.e. occupation probability $0$. 
Thus, for zero temperature, the {particle} density is 
calculated quantum mechanically as
\begin{displaymath}
u({\textbf r})= 
{2} \underbrace{\int d{\textbf k_\perp} 
\sum_l}_{E \le E_F} |\Psi_{{\textbf k_\perp},l}({\textbf r})|^2, 
\end{displaymath}
where {$2$ counts for the spin degeneracy of the particles}.
Using the expression (\ref{e-Psi_planar}) of the wave functions for 
planar structures, one has only an $x-$ dependent particle density
\begin{displaymath}
u(x)= {2} \sum_{l=1}^{N_F} \frac{|\psi_l(x)|^2}{(2\pi)^2} 
          \int_{0}^{k_{\perp,F}^{(l)}} 1\; d{\textbf k_\perp} ,
\end{displaymath}
where the sum runs up to the last occupied level, 
i.e. $\lambda_{N_F} \le E_F$ and the integral is taken up to a maximum
value of the transversal wave number, 
$k_{\perp,F}^{(l)}=\sqrt{\frac{2m_\perp}{\hbar^2}\left(E_F-\lambda_l\right)}$ 
-- depending  on $l$. 
The integral over $d{\textbf k_\perp}$ can be performed, and one
obtains the {particle} density at zero temperature
for an effective one-dimensional system \cite{pr00} as
\begin{equation}\label{rho_0}
u(x)={2} \frac{m_\perp}{2\pi \hbar^2} 
      \sum_{l=1}^{N_F} |\psi_l(x)|^2 \left(E_F-\lambda_l\right).
\end{equation}

At temperature $T$ different from zero the particle density is given by
\begin{eqnarray*}
u({\textbf r}) &=&
{2} \underbrace{\int d{\textbf k_\perp} \sum_l}_{0 \le E \le \infty} 
|\Psi_{{\textbf k_\perp},l}({\textbf r})|^2 f_{FD}(E,\mu) \nonumber \\
&=&{2} \int_0^\infty d{\textbf k_\perp} \sum_{l=1}^\infty 
    |\Psi_{{\textbf k_\perp},l}({\textbf r})|^2 f_{FD}(E,\mu),
\end{eqnarray*}
where $f_{FD}(E,\mu)$ is the Fermi-Dirac distribution function
\begin{displaymath}
f_{FD}(E,\mu)=\frac{1}{1+e^{\frac{E-\mu}{k T}}},
\end{displaymath}
$k$ is the Boltzmann constant and $\mu$ is the chemical potential. Inserting 
(\ref{e-Psi_planar}) and using (\ref{tot_en}) one obtains an $x-$ dependent 
particle density
\begin{displaymath}
u(x)= {2} \sum_{l=1}^{\infty} \frac{|\psi_l(x)|^2}{(2\pi)^2}
       \int_{0}^{\infty} 
         f_{FD}\left(\frac{\hbar^2 k_\perp^2}{2m_\perp}+\lambda_l,\mu\right)
       d{\textbf k_\perp}. 
\end{displaymath}
This corresponds to the general form (\ref{e-cars}) used later on.
Also in this case one can perform the integral over $d{\textbf k_\perp}$, 
obtaining \cite{cahay87,zimmermann88}
\begin{equation} \label{e-einsetz}
u(x)={2} \frac{m_\perp}{2\pi \hbar^2}
      kT \sum_{l=1}^{\infty} |\psi_l(x)|^2 
      \ln\left(1+e^{\frac{\mu-\lambda_l}{kT}}\right),
\label{rho_T}
\end{equation}
which provides the expression for $f_\beta(x)$ used in Lemma \ref{l-monotonT} below.
Carrying out the limit $T \rightarrow 0$ in (\ref{e-einsetz}) one obtains the same expression
 for the particle density as in the zero temperature limit (\ref{rho_0}), because 
\begin{displaymath}
\lim_{T\rightarrow 0} kT \ln\left(1+e^{\frac{\mu-\lambda_l}{kT}}\right) =
\left( \mu - \lambda_l \right) \Theta(\mu-\lambda_l) 
\end{displaymath}
($\Theta$ being the Heaviside function) and it is known that the chemical
 potential for zero temperature equals the Fermi energy, 
$\lim\limits_{T\rightarrow 0} \mu = E_F$.

\section{Kohn-Sham systems}
\label{s-spsys}

\subsection{ Kohn-Sham systems in one dimension}

The Kohn--Sham system is a system of equations governing the
electrostatic potential $\varphi$ and the  particle density $u$ under
consideration.  
{Let us consider a system of electrons.}
The electrostatic potential and the particle density have to obey
Poisson's equation
\begin{eqnarray}\label{e-poi}
  -\frac {d}{dx} \bigl( \varepsilon \frac {d}{dx}\varphi \bigr)
  = D - q u                              
\end{eqnarray}
in the device domain $\Omega = (0,1)$ where
$q$ is the magnitude of the elementary charge, and
$\varepsilon=\varepsilon(\fx)$ denotes                                 
the dielectric permittivity. 
The right--hand side of (\ref{e-poi}) is a charge
distribution $D$ of ionized dopants
and the particle density $u$ which is defined below, see (\ref{e-cars}).
One has to supplement the Poisson equation (\ref{e-poi}) by boundary
conditions. Usually one chooses inhomogeneous Dirichlet boundary conditions
\begin{equation}  \label{e-bv}
\varphi (0)  = \varphi_0 \in \R
\quad \mbox{and} \quad 
\varphi(1) = \varphi_1 \in \R,
\end{equation}
 which model Ohmic contacts. 
A straightforward {calculation} shows that inhomogeneous boundary
conditions can be transformed into homogenous boundary
conditions. Indeed, introducing the function $\widetilde{\varphi}: [0,1] \longmapsto \R$,
\begin{displaymath}
\widetilde{\varphi}  := \varphi_0 + 
\frac{\varphi_1 - \varphi_0}{\int^1_0 \frac{1}{\varepsilon(t)}dt}
\int^x_0 \frac{1}{\varepsilon(t)}dt
\end{displaymath}
and setting $\phi := \varphi(x) - \widetilde{\varphi}$, $x \in
[0,1]$, one gets that $\phi$ satisfies the Poisson equation
\begin{equation}\label{e-poi1}
- \frac{d}{dx}\varepsilon\frac{d}{dx}\phi  = D - qu
\end{equation}
and obeys the {homogeneous} Dirichlet boundary conditions
\begin{equation}\label{e-bv1}
\phi(0) = 0 
\quad \mbox{and} \quad 
\phi(1) = 0.
\end{equation}
This gives rise to the Poisson operator 
${\mathcal P} :=  -\frac {d}{dx} \varepsilon \frac {d}{dx}$,
supplemented by homogeneous Dirichlet boundary conditions.

The particle density $u$ is computed by
the quantum mechanical expression
\begin{equation}\label{e-cars}
u(V)(\fx)
={2}\sum_{l=1}^{\infty}f(\lambda_l(V)-\mu_{f}(V))\,|\psi_l(V)(x)|^2, 
\quad \fx \in \ab,
\end{equation}
where {$2$ counts for the spin degeneracy},
{$f$ is a distribution function (to be specified later on),}
$\lambda_l =\lambda_l(V)$ are the eigenvalues and
$\psi_l=\psi_l(V)$ are the corresponding $L^2$-normalized orthogonal eigenfunctions
of the Schr\"odinger operator $H_V$, cf. (\ref{e-schr0}).
The chemical
potential $\mathcal \mu_f(V)$ is determined by the condition
\begin{displaymath}
{2}\sum_{l=1}^\infty f(\lambda_l(V)- \mu_f(V))=N\,,
\end{displaymath}
where $N$ is the number of particles ({electrons})-- which is considered within this paper as given and
 fixed, see (\ref{0.5}). The effective Kohn--Sham potential $V$  depends on the particle densities
and splits up in the following way
\begin{displaymath}
V=V(u) = {\Delta E} + V_{xc} (u ) - q \varphi ,
\end{displaymath}
where $\varphi$ denotes the electrostatic potential.
%
%
{The given external potential ${\Delta E}$ 
represents the band--edge offsets
of the nanostructure materials.}
$V_{xc}$ is the exchange--correlation potential,
which depends on the particle density, see Section 1. 

It is a widely discussed question
how to supplement the Schr\"odinger operators (\ref{e-schr0}) by
suitable boundary conditions (see \cite{Fr2,kerkhoven94,kerkhoven96}).
We choose {homogeneous} Dirichlet boundary conditions
\begin{displaymath}
\psi(0)=0 
\quad \mbox{and} \quad 
\psi(1)=0
\end{displaymath}
for all $\psi$ in the domain of the Schr\"odinger operator
$H_V$. {They are assisted by the fact that} 
if we assume a device structure which confines the particles
(i.e. closed system), 
then the particle densities vanish on the boundary
of $\Omega$.

\subsection{Rigorous mathematical formulation of the problem}\label{s-problem}

In this section we give a mathematical formulation of the Kohn-Sham system; in particular,
 we make precise in which spaces the corresponding operators act and the solutions 
are chosen. In view of typical applications \cite{gajewski93}, our mathematical model must
 necessarily cover semiconductor heterostructures, i.e.,  the coefficients of the Schr\"odinger
and the Poisson operator are in general discontinuous. This forecloses that
the domain of the Schr\"odinger operator is not lying in $W^{2,2}$ what is natural elsewhere,
see e.g. \cite{nier91,nier93}. Fortunately, in the one--dimensional case the $\hone$--calculus
already leads to satisfactory results. Let us introduce the Poisson operator. 
\begin{assu}\label{assu-diel}
{\rm
The dielectric permittivity $\varepsilon$ is a real, non-negative function obeying 
$\varepsilon \in L^\infty$ and $\frac{1}{\varepsilon} \in L^\infty$.
}
\end{assu}
\begin{defn}
{\rm 
Let Assumption \ref{assu-diel} be satisfied. We define the Poisson operator ${\mathcal P}:
 W^{1,2}_0 \longmapsto W^{-1,2}$ by
\begin{equation} \label{e-pois}
<{\mathcal P}v,w> = \int^1_0 \varepsilon(x)v'(x)\overline{w'(x)} dx, \quad u,w \in  W^{1,2}_0, 
\end{equation}
where here and in the sequel $<\cdot,\cdot>$ denotes the dual pairing between $W^{1,2}_0$ and 
$W^{-1,2}$.
}
\end{defn}
One easily estimates
\begin{equation}
\left|<{\mathcal P}v,w>\right| \le
\|\varepsilon\|_{L^\infty}\|v\|_{W^{1,2}}\|w\|_{W^{1,2}}
\end{equation}
for $u,w \in W^{1,2}_0$.
Consequently, ${\mathcal P}:W^{1,2}_0 \mapsto W^{-1,2}$ is well defined and continuous. 
Furthermore, we have
\begin{displaymath}
\|\phi\|^2_{W^{1,2}_0} \le 
\|2/\varepsilon\|_{L^\infty} \int^1_0 \varepsilon(x)|\phi'(x)|^2 dx, \quad \phi \in W^{1,2}_0, 
\end{displaymath}
what implies
\begin{displaymath}
\|\phi\|^2_ {W^{1,2}_0}\le \|2/\varepsilon\|_{L^\infty} <{\mathcal P}\phi,\phi>
\end{displaymath}
for $\phi \in W^{1,2}_0$. Hence, by the Lax-Milgram lemma, the inverse
operator ${\mathcal P}^{-1}: W^{-1,2}\longmapsto W^{1,2}_0$ exists and
its norm does not exceed $\|2/\varepsilon\|_{L^\infty}$.
\begin{assu}\label{assu-dot}
{\rm
The density of ionized dopants $D$ is a 'real distribution' from
$W^{-1,2}$, what means that it takes real values if applied to real elements from
$W^{1,2}_0$.
}
\end{assu}
\begin{defn}
Let Assumptions \ref{assu-diel} and \ref{assu-dot} be
satisfied. Further, suppose $u \in L^1 \hookrightarrow W^{-1,2}$. We say 
that $\varphi \in W^{1,2}$ is a solution of the Poisson equation (\ref{e-poi}) obeying
the inhomogeneous Dirichlet boundary conditions (\ref{e-bv}) if
$\phi := \varphi - \widetilde{\varphi} \in W^{1,2}_0$ satisfies
\begin{displaymath}
{\mathcal P}\phi = D - qu.
\end{displaymath}
We set 
\begin{equation}\label{4.6}
\varphi(u) := \widetilde{\varphi} + {\mathcal P}^{-1}(D - qu).
\end{equation}
\end{defn}

Next we are going to introduce the Schr\"odinger operator.
\begin{assu}\label{assu-effekt}
{\rm
The effective mass $m$ is a real, non-negative function obeying $m \in L^\infty$
and $\frac{1}{m} \in L^\infty$.
}
\end{assu}
\begin{defn}\label{d-hv}
{\rm
Let Assumption \ref{assu-effekt} be satisfied.  
If $V \in L^1$ is real valued, then
the Schr\"odinger operator $ H_V : W^{1,2}_0 \longmapsto W^{-1,2}$
corresponding to the potential $V$ is defined by
\begin{displaymath}
<  H_V v,w> 
    = \frac{\hbar^2}{2} \int_\A^\B
    \frac{1}{m(x)}\, v'(x)\, \overline{w}'(x) dx 
    + \int_\A^\B V(x)v(x)\overline{w}(x) dx , 
  \end{displaymath}
$v,w \in W^{1,2}_0$. The definition is justified, because  $W^{1,2}_0$ continuously
embeds into $\lun $. Thus, the second term on the right hand side 
is always finite and defines a continuous sesquilinear form
on $W^{1,2}_0$. The operator with zero potential will be denoted by $H_0$ in the sequel.
The restriction of the operators just introduced to other range
spaces, in particular, to  $\lzw$, we also denote by $H_V$.  
}
\end{defn}
For any real valued $V \in \lei $, the restriction of $H_V$
to the range space $\lzw$ is selfadjoint and has a complete orthonormal 
system of eigenfunctions. All eigenvalues are then real and simple. 
\begin{defn} \label{d-vert}
{\rm
We say that a continuous function $f:\R \mapsto [0,\infty[$ is from the
class $\mathcal D$, if one of the following conditions is satisfied:
\begin{enumerate}

\item[(i)]
there is a $t \in ]-\infty,\infty[$ such that $f$ is strictly monotonously decreasing on
the interval $]-\infty,t[$ and zero on $[t,\infty[$,

\item[(ii)]
the function $f$ obeys $f(s) > 0$ for any $s \in \R$ and
\end{enumerate}
\begin{displaymath}
\sup_{s \in [1,\infty[}f(s)s^2 <\infty.
\end{displaymath}
is valid.
}
\end{defn}
\begin{assu}\label{assu-fermi}
{\rm
In the sequel we assume that all occurring distribution 
functions are from the space $\mathcal D$.
}
\end{assu}
\begin{rem} \label{r-temp0}
{\rm
Let us explicitely note that in contrast to other papers besides continuity no
further regularity assumptions are 
imposed on the distribution functions. Only this allows to include the zero
temperature case, see Chapters~6.
}
\end{rem}
\begin{lem} \label{l-feermi}
Let Assumptions \ref{assu-effekt} and \ref{assu-fermi} be satisfied.
If $\{\lambda_l(V)\}_l$ are the eigenvalues of the Schr\"odinger operator
$H_V$ with real potential $V \in L^1$, 
then for every $N \in [1,\infty[$ there is exactly one real number $E \in \R$ which satisfies
\begin{displaymath}
{2}\sum_l f(\lambda_l(V)-E)=N,
\end{displaymath}
This real number is called the chemical potential and is denoted by $\mu_f(V)$.
\end{lem}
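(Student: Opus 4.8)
The plan is to define the real-valued function
\[
F(E) := 2\sum_l f(\lambda_l(V)-E),
\]
show that it is well defined (finite) on all of $\R$, continuous, strictly monotonically increasing, and has the correct limiting behavior as $E\to\pm\infty$, and then invoke the intermediate value theorem together with strict monotonicity to conclude that $F(E)=N$ has exactly one solution for each $N\in[1,\infty[$. The main point to nail down is convergence of the series and the limits; monotonicity and continuity will then be comparatively routine.

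First I would establish finiteness of the sum. Since $V\in L^1$ is real valued, the restriction of $H_V$ to $L^2$ is selfadjoint, bounded below, with a complete orthonormal eigenbasis and simple eigenvalues $\lambda_1(V)<\lambda_2(V)<\cdots$ accumulating only at $+\infty$; moreover a min--max / comparison argument against $H_0$ (whose eigenvalues grow like $l^2$, after correcting for the $L^\infty$-bounds on $m$ and $1/m$) gives a lower bound $\lambda_l(V)\ge c\,l^2 - C$ with $c>0$. Now fix $E\in\R$. If $f$ is of type (i) in Definition~\ref{d-vert}, then $f(\lambda_l(V)-E)=0$ as soon as $\lambda_l(V)-E\ge t$, i.e. for all but finitely many $l$, so $F(E)$ is actually a finite sum. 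If $f$ is of type (ii), then for $l$ large enough $\lambda_l(V)-E\ge 1$, and the bound $\sup_{s\ge1}f(s)s^2<\infty$ together with $\lambda_l(V)-E\ge c\,l^2-C-E$ gives $f(\lambda_l(V)-E)\le \mathrm{const}\cdot(\lambda_l(V)-E)^{-2}\le \mathrm{const}\cdot l^{-4}$ for large $l$, so the series converges; the same tail bound is locally uniform in $E$, which I will use for continuity.

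Next, monotonicity: since $f$ is nonincreasing in both cases (strictly decreasing where it is positive), each term $E\mapsto f(\lambda_l(V)-E)$ is nondecreasing, and the first term $E\mapsto f(\lambda_1(V)-E)$ is \emph{strictly} increasing on the set where $f(\lambda_1(V)-E)>0$; hence $F$ is strictly increasing on the range of $E$-values of interest. For continuity, the locally uniform tail estimate from the previous paragraph lets me interchange limit and sum (or apply a Weierstrass-type argument), so $F$ is continuous as a locally uniform limit of continuous partial sums; in case (i) it is even locally a finite sum of continuous functions. For the boundary behavior: as $E\to-\infty$, $f(\lambda_l(V)-E)\to f(+\infty)=0$ with a dominating summable bound, so $F(E)\to0$; as $E\to+\infty$, already the first term $2f(\lambda_1(V)-E)\to 2\sup f\ge 2\cdot\lim_{s\to-\infty}f(s)$, which is $+\infty$ in case (ii) (since $f>0$ everywhere and $f$ is nonincreasing, $\lim_{s\to-\infty}f(s)\in(0,\infty]$; if it were finite the sum of infinitely many nearly-equal positive terms would still diverge as more $\lambda_l-E$ become negative), and in case (i) one sees $F(E)\to\infty$ similarly by counting how many eigenvalues fall below $E+|t|$. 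Thus $F$ maps $\R$ onto $]0,\infty[\supseteq[1,\infty[$, and by strict monotonicity the preimage of each $N\ge1$ is a single point, which we name $\mu_f(V)$.

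The step I expect to be the main obstacle is the $E\to+\infty$ divergence in case (ii): one must argue that $F(E)\to\infty$ even though individual terms stay bounded by $\sup f$. The clean way is to note that for any fixed $K$, once $E$ is large enough that $\lambda_K(V)-E\le 0$ we have $F(E)\ge 2K\,f(0)>0$ with $K$ arbitrary, hence $\liminf_{E\to\infty}F(E)=\infty$; this uses only $f(0)>0$ and that infinitely many eigenvalues exist. The rest — finiteness, continuity, monotonicity — reduces to the eigenvalue growth estimate $\lambda_l(V)\gtrsim l^2$ combined with the decay hypothesis on $f$, and is straightforward.
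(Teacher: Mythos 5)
Your proof follows essentially the same route as the paper's (which is a very terse sketch stating exactly the three facts you establish: finiteness of the series, the limits $0$ and $\infty$ at $E\to\mp\infty$, and strict monotonicity on the set where the sum is positive, followed by the intermediate value theorem). The finiteness argument via the eigenvalue lower bound $\lambda_l(V)\gtrsim l^2$, the tail-uniform continuity argument, and the dominated-convergence argument for the $E\to-\infty$ limit are all sound.

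There is, however, one small but genuine flaw in your treatment of the $E\to+\infty$ divergence. You write that once $\lambda_K(V)-E\le 0$ one has $F(E)\ge 2Kf(0)$ and that ``this uses only $f(0)>0$.'' But in case (i) of Definition~\ref{d-vert}, the function $f$ vanishes identically on $[t,\infty[$, so if $t\le 0$ then $f(0)=0$ and the bound is vacuous. The fix is straightforward: pick any fixed $a$ with $f(a)>0$ (in case (i) any $a<t$ works, in case (ii) any $a$ at all), and observe that once $E$ is large enough that $\lambda_K(V)-E\le a$, then for all $l\le K$ one has $\lambda_l(V)-E\le a$ and hence $f(\lambda_l(V)-E)\ge f(a)$ by monotonicity, giving $F(E)\ge 2Kf(a)$; letting $K\to\infty$ yields the divergence. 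With that replacement the argument is complete and matches the paper's.

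A minor remark for your awareness rather than a correction to be made: Definition~\ref{d-vert}(ii) does not actually state that $f$ is nonincreasing, so the monotonicity you (and the paper) invoke is an implicit standing assumption on the distribution functions; since the paper uses it repeatedly (e.g.\ in the proofs of Lemma~\ref{l-fermi} and Corollary~\ref{c-monoton}), it is reasonable to treat it as part of the hypotheses, but it is worth noting that it is not explicit in the definition of $\mathcal D$.
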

\begin{proof}
For every $f \in \mathcal{D}$ and every $E \in \R$ the expression 
$\sum_l f(\lambda_l(V)-E)$ is finite, 
$\lim\limits_{E \mapsto -\infty}\sum_l f(\lambda_l(V)-E)=0$ and 
$\lim\limits_{E \mapsto \infty}\sum_l f_l(\lambda(V)-E)=\infty. $ 
Moreover, it is not hard to see 
that on the set $\{E \in \R: \sum_l f(\lambda_l(V)-E) \neq 0 \}$ 
the function $E \rightarrow 
\sum_l f(\lambda_l(V)-E)$
is strictly increasing.
\end{proof}
\begin{rem}\label{r-fixx}
{\rm 
In the following we assume that the number of particles in the
Kohn-Sham system  is always fixed by $N \in [1,\infty[$ without indicating
this explicitly.
}
\end{rem}
\begin{defn} \label{d-particleop}
{\rm
Let Assumptions \ref{assu-effekt} and \ref{assu-fermi} be
satisfied. Further, 
let $\{\psi_l(V)\}_l$ and $\{\lambda_l(V)\}_l$ be the eigenfunctions
and eigenvalues of the Schr\"o\-dinger operator
$H_V$ with real potential $V \in L^1$ and let 
$\mu_f(V)$ be the chemical potential. 
Then
\begin{displaymath}
    \mathcal{N}_f(V)(\fx ) 
    := {2}\sum_{l}
    f(\lambda_{l}(V)-\mu_f(V))\big| \psi_l(V)(\fx) \big|^2, \quad \fx\in [\A,\B]
\end{displaymath}
defines an operator $\mathcal{N}_f: L^1 \longmapsto L^1$ which is called the particle density operator.
}
\end{defn}
\begin{rem}
{\rm
$\mathcal{N}_f(V)$ obviously satisfies
\begin{displaymath}
\int_0^1 \mathcal{N}_f(V)dx
= {2}\sum_{l=1}^\infty f(\lambda_l(V)-\mu_f(V))=N.
\end{displaymath}
%
%
%
%
}
\end{rem}
\begin{assu}\label{assu-xc}
{\rm
a) The potential $\Delta E$ is a real-valued $L^1$ function.\\ 
b) The exchange--correlation term in its dependence on the 
  particle densities, i.e.\ the mapping $u \longmapsto V_{xc}(u)$, is a continuous 
and bounded mapping from $L^1$ into $L^1 $. This assumption covers
the Hartree-Fock-Slater and Thomas-Fermi exchange--correlation terms.
}
\end{assu}
\begin{defn}\label{IV.16}
{\rm
Let Assumptions \ref{assu-diel}, \ref{assu-dot},
\ref{assu-effekt}, \ref{assu-fermi} and \ref{assu-xc} be satisfied.
The pair $\{\varphi,u\} \in W^{1,2} \times L^1$ is called a
solution of the Kohn-Sham system with distribution function $f$, if $\varphi$ solves
 the Poisson equation (\ref{e-poi}) with inhomogeneous Dirichlet boundary conditions
(\ref{e-bv}) and the particle density $u$ is given by 
\begin{displaymath}
u := \mathcal{N}_f(\Delta E + V_{xc}(u) - q\varphi).
\end{displaymath}
}
\end{defn}

\section{Existence of solutions}\label{existence-chap}

In this section we are going to show that the Kohn-Sham system always admits  a solution. 
As in \cite{KR1,KR2} Schauder's fixed point theorem is used, what requires several 
estimates (e.g. eigenvalues of the Schr\"odinger operator, $W^{1,2}$-bounds of the 
particle density operator).
To assure its applicability, we first establish some prerequisites.
\begin{defn}
{\rm
For $m\in L^\infty$ we set
$\underline{m} :=\max\left(1,\frac{2\|m\|_{L^\infty}}{\hbar^2}\right)$.
}
\end{defn}
\begin{rem}\label{m-remark}
{\rm
Recognizing that $\underline{m}$ has been defined such that $1/\underline{m}$ is a
monotonicity constant of the operator $H_1=H_0+1:W_0^{1,2}\mapsto W^{-1,2}$,
the Lax-Milgram lemma shows, that the norm of the inverse operator is not larger
than $\underline{m}$,
\begin{displaymath}
\|\psi\|^2_{W^{1,2}_0}\leq\underline{m}<(H_0+1)\psi,\psi
>,\quad\left\|(H_0+1)^{-1}\right\|_{\mathcal{B}(W^{-1,2},W_0^{1,2})}
\le \underline{m}.
\end{displaymath}
}
\end{rem}
By some calculations one finds
\begin{displaymath}
\|\psi\|_{L^\infty}\leq\sqrt{2}\, 
\|\psi\|_{W^{1,2}_0}^{\frac{1}{2}}\|\psi\|_{L^2}^{\frac{1}{2}}\,,
\end{displaymath}
which proves the continuous embedding $W^{1,2}_0\hookrightarrow L^{\infty}$.

The following proposition allows to compare the eigenvalues of the Schr\"o\-dinger operators
$H_V$ and $H_0$ and, additionally, provides a comparison between the operators 
$(H_0+1)^{-1/2}$ and $(H_V-\rho)^{-1/2}$. Both we will need later on as technical 
instruments.
\begin{prop}\label{H0-HV-prop}
Let Assumption \ref{assu-effekt} be satisfied and let $V \in
L^1$ be real-valued. Then the following holds:
\begin{enumerate}
\item[{\rm(i)}] The eigenvalues $\lambda_l(V)$ of the 
operator $H_V$ can be estimated as follows:
\begin{equation}\label{EW absch.}
\frac{1}{2}(\varsigma_l+1)+\rho_V\leq \lambda_l(V)\leq\frac{3}{2}(\varsigma_l+1)-\rho_V-2,
\quad l=1,2,\ldots
\end{equation}
where the $\varsigma_l$ are the eigenvalues of the operator $H_0$ and
$\rho_V$ is given by
\begin{displaymath}
\rho_V=-2\|V\|_{L^1}^{2}\underline{m}-1\,.
\end{displaymath}

\item[\rm{(ii)}]
For $\rho\leq\rho_V$ the spectrum of $(H_V-\rho)^{-1}$ is contained in
$[0,2]$ and 
\begin{equation}\label{HV-H0-ineq}
\left\|(H_V-\rho)^{-\frac{1}{2}}(H_0+1)^{\frac{1}{2}}\right\|=
\left\|(H_0+1)^{\frac{1}{2}}(H_V-\rho)^{-\frac{1}{2}}\right\|\leq\sqrt{2}\,.
\end{equation}
\end{enumerate}
\end{prop}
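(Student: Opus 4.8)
The plan is to reduce everything to a comparison of quadratic forms on $W^{1,2}_0$ and then invoke the min-max principle together with the Lax--Milgram estimate recorded in \remref{m-remark}. First I would fix the key inequality: for real-valued $V\in L^1$ and any $\psi\in W^{1,2}_0$ with $\|\psi\|_{L^2}=1$, the potential term $\int_0^1 V|\psi|^2\,dx$ is controlled by $\|V\|_{L^1}\|\psi\|_{L^\infty}^2$, and using the embedding estimate $\|\psi\|_{L^\infty}\le\sqrt2\,\|\psi\|_{W^{1,2}_0}^{1/2}\|\psi\|_{L^2}^{1/2}$ together with the monotonicity bound $\|\psi\|_{W^{1,2}_0}^2\le\underline m\,\langle(H_0+1)\psi,\psi\rangle$ from \remref{m-remark}, one gets
\begin{displaymath}
\Bigl|\int_0^1 V|\psi|^2\,dx\Bigr|\le 2\|V\|_{L^1}\,\underline m^{1/2}\,\langle(H_0+1)\psi,\psi\rangle^{1/2}.
\end{displaymath}
A Young-type splitting $ab\le \tfrac12(a^2/c + cb^2)$ applied with the right constant then turns this into a form bound
\begin{displaymath}
\Bigl|\int_0^1 V|\psi|^2\,dx\Bigr|\le \tfrac12\langle(H_0+1)\psi,\psi\rangle + \bigl(2\|V\|_{L^1}^2\underline m + \tfrac12\bigr),
\end{displaymath}
which is exactly where the quantity $\rho_V=-2\|V\|_{L^1}^2\underline m-1$ enters. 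This shows $\langle H_V\psi,\psi\rangle \ge \tfrac12\langle(H_0+1)\psi,\psi\rangle + \rho_V$ and, symmetrically, $\langle H_V\psi,\psi\rangle \le \tfrac32\langle(H_0+1)\psi,\psi\rangle - \rho_V - 2$ (keeping track of the shift so that the constant matches).

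For part (i), I would feed these two-sided form inequalities into the min-max characterization of the eigenvalues $\lambda_l(V)$ and $\varsigma_l$ of the selfadjoint realizations in $L^2$. The lower form bound $H_V\ge \tfrac12(H_0+1)+\rho_V$ gives $\lambda_l(V)\ge \tfrac12(\varsigma_l+1)+\rho_V$ directly, since the $l$-th eigenvalue of $\tfrac12(H_0+1)+\rho_V$ is $\tfrac12(\varsigma_l+1)+\rho_V$; the upper form bound $H_V\le \tfrac32(H_0+1)-\rho_V-2$ gives $\lambda_l(V)\le \tfrac32(\varsigma_l+1)-\rho_V-2$ in the same way. One small point to check is that $H_V$ indeed has discrete spectrum with the same counting of eigenvalues — but this is guaranteed by the lower bound (which shows $H_V$ is bounded below and form-comparable to $H_0$, hence has compact resolvent) together with the fact, already stated in the excerpt after \defnref{d-hv}, that the $L^2$-realization of $H_V$ for real $V\in L^1$ is selfadjoint with a complete orthonormal eigenbasis.

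For part (ii), fix $\rho\le\rho_V$. From part (i) (or directly from the lower form bound) $H_V\ge \tfrac12(H_0+1)+\rho_V\ge \rho_V+1 > \rho$, so $H_V-\rho$ is strictly positive and $(H_V-\rho)^{-1}$ is a bounded nonnegative operator; its spectral bound from below is $(\lambda_1(V)-\rho)^{-1}$. The lower bound $\lambda_1(V)-\rho\ge \tfrac12(\varsigma_1+1)+\rho_V-\rho\ge \tfrac12(\varsigma_1+1)\ge \tfrac12$ (using $\rho\le\rho_V$ and $\varsigma_1>0$) shows $\spec((H_V-\rho)^{-1})\subset[0,2]$. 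For the norm identity in \eqref{HV-H0-ineq}, note that the two operators $(H_V-\rho)^{-1/2}(H_0+1)^{1/2}$ and $(H_0+1)^{1/2}(H_V-\rho)^{-1/2}$ are adjoints of each other, so they have the same norm; and that norm squared equals the norm of $(H_0+1)^{1/2}(H_V-\rho)^{-1}(H_0+1)^{1/2}$, which by the form inequality $H_0+1\le 2(H_V-\rho)$ (a restatement of the lower form bound together with $\rho\le\rho_V$: $H_V-\rho\ge \tfrac12(H_0+1)+(\rho_V-\rho)\ge\tfrac12(H_0+1)$) is bounded by $2$. The cleanest way to phrase this last step is via the Heinz--Kato inequality, or simply: $\langle(H_0+1)\psi,\psi\rangle\le 2\langle(H_V-\rho)\psi,\psi\rangle$ for all $\psi$ in the form domain implies $\|(H_V-\rho)^{-1/2}(H_0+1)^{1/2}\varphi\|^2\le 2\|\varphi\|^2$ after substituting $\psi=(H_V-\rho)^{-1/2}\varphi$ and using that $W^{1,2}_0$ is a common form core. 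The main obstacle I anticipate is purely bookkeeping: getting the numerical constants $\tfrac12$, $\tfrac32$, the "$-2$", and the precise value of $\rho_V$ to line up, which forces one to be careful about whether the Young splitting is applied to $H_0$ or to $H_0+1$ and about the direction of each shift.
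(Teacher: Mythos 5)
The paper itself does not prove \proref{H0-HV-prop}; it simply refers to \cite{KR1}, Prop.\ 3.3. Your blind proof follows what is in fact the standard argument for this kind of statement, and the structure is sound: dominate the potential form by $\|V\|_{L^1}\|\psi\|^2_{L^\infty}$, convert via the interpolation inequality and \remref{m-remark} to a fractional bound in $\langle(H_0+1)\psi,\psi\rangle$, Young-split, compare forms, then invoke min--max for (i) and operator monotonicity of the inverse (resp.\ the substitution $\psi=(H_V-\rho)^{-1/2}\varphi$) for (ii).

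One numerical point: the extra $+\tfrac12$ you insert in the form bound is spurious and actually breaks the exact match of constants. The tight Young splitting $ab\le\tfrac12 a^2+\tfrac12 b^2$ applied to $a=2\|V\|_{L^1}\underline m^{1/2}$, $b=\langle(H_0+1)\psi,\psi\rangle^{1/2}$ gives
\begin{displaymath}
\Bigl|\int_0^1 V|\psi|^2\,dx\Bigr|\le \tfrac12\langle(H_0+1)\psi,\psi\rangle + 2\|V\|_{L^1}^2\underline m
\end{displaymath}
without any additional $\tfrac12$, and with this version the algebra comes out on the nose: $\langle H_V\psi,\psi\rangle\ge \tfrac12\langle(H_0+1)\psi,\psi\rangle+\rho_V$ and $\langle H_V\psi,\psi\rangle\le \tfrac32\langle(H_0+1)\psi,\psi\rangle-\rho_V-2$ with $\rho_V=-2\|V\|_{L^1}^2\underline m-1$. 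With your looser bound you would land at $\rho_V-\tfrac12$ on the left and $-\rho_V-\tfrac32$ on the right, which is weaker than what the statement asserts. A second cosmetic slip: after the substitution $\psi=(H_V-\rho)^{-1/2}\varphi$ you obtain the bound for $\|(H_0+1)^{1/2}(H_V-\rho)^{-1/2}\varphi\|$, not for $\|(H_V-\rho)^{-1/2}(H_0+1)^{1/2}\varphi\|$ as written; since the two norms coincide by the adjoint identity you already invoked, this is harmless, but the order should be corrected. Everything else, including the lower bound $H_V-\rho\ge\tfrac12 I$ and the spectral inclusion $\spec((H_V-\rho)^{-1})\subset[0,2]$, is fine.
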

A proof of this is to be found in \cite{KR1}, see Prop.~3.3.
\begin{rem} \label{r-unifbound}
{\rm
The form which defines $H_0$ may be estimated as follows:
\begin{eqnarray*}
\lefteqn{
\frac{\hbar^2}{2} 
\essinf\limits_{x\in(0,1)}\left\{\frac{1}{m}\right\}\int_\A^\B \, |v'(x)|^2dx \le}\\
& &  
\frac{\hbar^2}{2} \int_\A^\B m(x)^{-1}\, |v'(x)|^2 dx \le 
\frac{\hbar^2}{2} 
\esssup\limits_{x\in(0,1)}\left\{\frac{1}{m}\right\}\int_\A^\B\,|v'(x)|^2 dx.
\end{eqnarray*}
Thus, the eigenvalues of $H_0$ compare by the minimax principle from below and above
in an obvious manner with the eigenvalues of 
$-\frac {d^2}{dx^2}$  -- combined with a {homogeneous} Dirichlet condition.\\
The reader should notice that \proref{H0-HV-prop} gives uniform bounds with respect to
$L^1-$bounded sets of potentials. 
}
\end{rem}
From  \proref{H0-HV-prop} we can deduce the following
\begin{lem}\label{lem-resolvent-continuity}
Let Assumption \ref{assu-effekt} be satisfied and let 
$\mathcal{M}\subset L^1$ be a bounded set of real-valued potentials. If 
\begin{displaymath}
\rho < \rho_\mathcal{M} :=\inf_{V\in\mathcal{M}}\rho_V=-2\underline{m}
\sup_{V\in\mathcal{M}}\|V\|^2_{L^1}-1, 
\end{displaymath}
then the mapping 
\begin{displaymath}
L^1 \supset \mathcal{M} \ni V\mapsto (H_V-\rho)^{-1} \in \mathcal{B}(L^2)
\end{displaymath}
is Lipschitz continuous in $V$ with a Lipschitz constant depending on
$\mathcal{M}$.
\end{lem}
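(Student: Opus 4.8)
The plan is to establish the Lipschitz estimate by writing the resolvent difference in a second-resolvent (Hilbert) identity and controlling it with the uniform bounds from \proref{H0-HV-prop}. For $V,W \in \mathcal{M}$ and $\rho < \rho_{\mathcal M}$, note first that both $\rho < \rho_V$ and $\rho < \rho_W$, so part (ii) of \proref{H0-HV-prop} applies to each. I would start from the resolvent identity
\begin{displaymath}
(H_V-\rho)^{-1}-(H_W-\rho)^{-1} = (H_V-\rho)^{-1}\,(H_W-H_V)\,(H_W-\rho)^{-1},
\end{displaymath}
valid because $H_W - H_V$ is the bounded multiplication by $W-V \in L^1 \hookrightarrow W^{-1,2}$ acting from $W^{1,2}_0$ to $W^{-1,2}$, and it needs to be interpreted as an operator on $L^2$ by factoring through the form domain. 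Thus I would rather factor it as
\begin{displaymath}
(H_V-\rho)^{-\frac12}\Bigl[(H_V-\rho)^{-\frac12}(H_W-H_V)(H_W-\rho)^{-\frac12}\Bigr](H_W-\rho)^{-\frac12},
\end{displaymath}
where the outer factors are each bounded on $L^2$ (their square equals $(H_\bullet - \rho)^{-1}$, whose spectrum lies in $[0,2]$, so each has norm $\le \sqrt{2}$).

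The heart of the matter is the middle bracket. I would insert $(H_0+1)^{1/2}(H_0+1)^{-1/2}$ on both sides of $W-V$ to get
\begin{displaymath}
\bigl\|(H_V-\rho)^{-\frac12}(H_0+1)^{\frac12}\bigr\|\;\bigl\|(H_0+1)^{-\frac12}(W-V)(H_0+1)^{-\frac12}\bigr\|\;\bigl\|(H_0+1)^{\frac12}(H_W-\rho)^{-\frac12}\bigr\|,
\end{displaymath}
and the first and third norms are $\le\sqrt2$ by \eqref{HV-H0-ineq} (using $\rho\le\rho_V$ and $\rho\le\rho_W$ respectively). So everything reduces to the claim that $\psi \mapsto (H_0+1)^{-1/2}(W-V)(H_0+1)^{-1/2}\psi$ has $L^2$-operator norm bounded by $C\,\|W-V\|_{L^1}$ with $C$ depending only on $\mathcal{M}$ (in fact only on $\underline m$ here). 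For this I would use that $(H_0+1)^{-1/2}$ maps $L^2$ into $W^{1,2}_0$ with norm $\le \sqrt{\underline m}$ (take square roots in \remref{m-remark}), and $W^{1,2}_0 \hookrightarrow L^\infty$ continuously (the embedding constant displayed just before \proref{H0-HV-prop}, with value $\sqrt 2$ after using $\|\psi\|_{W^{1,2}_0}, \|\psi\|_{L^2}$ bounds); then for $g = (H_0+1)^{-1/2}\psi$ and $h=(H_0+1)^{-1/2}\chi$,
\begin{displaymath}
\bigl|\langle (W-V)g, h\rangle_{L^2}\bigr| \le \|W-V\|_{L^1}\,\|g\|_{L^\infty}\,\|h\|_{L^\infty} \le \|W-V\|_{L^1}\, 2\underline m \,\|\psi\|_{L^2}\|\chi\|_{L^2},
\end{displaymath}
which gives exactly the desired bound with $C = 2\underline m$.

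Combining the three estimates yields
\begin{displaymath}
\bigl\|(H_V-\rho)^{-1}-(H_W-\rho)^{-1}\bigr\| \le 2\sqrt2 \cdot 2\underline m \cdot \sqrt2 \cdot \|V-W\|_{L^1} = 8\underline m\,\|V-W\|_{L^1},
\end{displaymath}
i.e. Lipschitz continuity with constant $8\underline m$, which depends on $\mathcal{M}$ only through $\underline m$ (and is in fact uniform over all $\rho < \rho_{\mathcal{M}}$). The one point requiring a little care — the main obstacle — is the rigorous justification that $W-V$, a priori only an element of $W^{-1,2}$, genuinely defines, after the sandwiching by $(H_0+1)^{-1/2}$, a bounded operator on $L^2$ and that the formal resolvent identity above is legitimate in $\mathcal{B}(L^2)$; this is handled precisely by the form-domain factorization together with the $W^{1,2}_0\hookrightarrow L^\infty$ embedding that makes the $L^1$-pairing finite, exactly as in the justification of \defnref{d-hv}. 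Everything else is the bookkeeping of multiplying the three constants.
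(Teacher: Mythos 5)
Your proposal is correct and follows essentially the same route as the paper: same resolvent identity, same factorization $(H_V-\rho)^{-1/2}\cdot(H_V-\rho)^{-1/2}(H_0+1)^{1/2}\cdot(H_0+1)^{-1/2}(U-V)(H_0+1)^{-1/2}\cdot(H_0+1)^{1/2}(H_U-\rho)^{-1/2}\cdot(H_U-\rho)^{-1/2}$, same use of Proposition \ref{H0-HV-prop}(ii) for the $\sqrt2$ bounds, and same mechanism for the middle piece (the paper phrases it as $\|(H_0+1)^{-1/2}\|_{\mathcal B(L^1;L^2)}\|U-V\|_{\mathcal B(L^\infty;L^1)}\|(H_0+1)^{-1/2}\|_{\mathcal B(L^2;L^\infty)}$, which is the operator-norm formulation of your sesquilinear estimate). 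The only difference is cosmetic: you track explicit constants to land on $8\underline m$, whereas the paper leaves the constant unspecified.
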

\begin{proof}
If $\rho \le \rho_\mathcal{M}$, then $\rho$ belongs to the resolvent
set of $H_V$ for any $V \in \mathcal{M}$ by Proposition \ref{H0-HV-prop}. 
Moreover, one has
\begin{equation} \label{e-lowbound}
H_V - \rho =  H_V - \rho_V + \rho_V - \rho \ge
\frac{1}{2}I
\end{equation}
 since  $H_V - \rho_V \ge \frac{1}{2}I$ for $V \in
\mathcal{M}$ by Proposition \ref{H0-HV-prop}.
Applying the resolvent equation
\begin{displaymath}
(H_V - \rho)-(H_U - \rho)=(H_V - \rho)(U-V)(H_U - \rho),
\end{displaymath}
one obtains
\begin{eqnarray*}
\lefteqn{
\|(H_V - \rho)^{-1}-(H_U - \rho)^{-1}\|=}\\
& &
\|(H_V - \rho)^{-1}(U-V)(H_U - \rho)^{-1}\| .
\end{eqnarray*}
The latter term may be estimated as follows:
\begin{eqnarray*}
\lefteqn{
\|(H_V - \rho)^{-1}(U-V)(H_U - \rho)^{-1}\| \le
\|U-V\|_{\mathcal{B}(L^\infty;L^1) }\,\times
}\\
& &
\|(H_V - \rho)^{-1/2}\|\|(H_V - \rho)^{-1/2}(H_0+1)^{1/2}\| \times\\
& &
\|(H_0+1)^{-1/2}\|_{\mathcal{B}(L^1;L^2)}
\|(H_0+1)^{-1/2}\|_{\mathcal{B}(L^2;L^\infty)}\times\\
& &
\|(H_0+1)^{1/2}(H_U - \rho)^{-1/2}\| \|(H_U - \rho)^{-1/2}\|.
\end{eqnarray*}
The factors $\|(H_V - \rho)^{-1/2}\|$ and $\|(H_U - \rho)^{-1/2}\|$ are uniformly bounded
 in $U,V$ due to (\ref{e-lowbound}). $\|(H_V - \rho)^{-1/2}(H_0+1)^{1/2}\|$,
$\|(H_0+1)^{1/2}(H_U - \rho)^{-1/2}\|$ are uniformly bounded by (\ref{HV-H0-ineq}). 
Furthermore, $\|(H_0+1)^{-1/2}\|_{\mathcal{B}(L^2;L^\infty)}$ is finite
by the embedding $W_0^{1,2} \hookrightarrow L^\infty$. This implies 
$\|(H_0+1)^{-1/2}\|_{\mathcal{B}(L^1;L^2)} <\infty$  by duality. Finally,
$\|U-V\|_{\mathcal{B}(L^\infty;L^1)}$ is identical with $\|U-V\|_{L^1}$.
\end{proof}
\begin{cor} \label{c-eigenva}
Let Assumption \ref{assu-effekt} be satisfied. 
If $\{V_n\}_n$ converges to $V$ in $L^1$, then the operator
sequence $\{H+V_n\}_{n\in\N}$ converges in the norm resolvent sense to $H+V$.
In particular, the eigenvalues and the eigenprojections of 
$H+V_n$ converge to the corresponding eigenvalues and eigenprojections
of $H+V$.
\end{cor}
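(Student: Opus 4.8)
The plan is to deduce Corollary~\ref{c-eigenva} directly from Lemma~\ref{lemref{lem-resolvent-continuity}} together with the standard perturbation-theoretic consequences of norm resolvent convergence. First I would observe that a convergent sequence $\{V_n\}_n\to V$ in $L^1$ is in particular bounded in $L^1$, so the set $\mathcal{M}:=\{V\}\cup\{V_n:n\in\N\}$ is a bounded set of real-valued potentials in $L^1$. Hence I may fix any $\rho<\rho_{\mathcal M}=-2\underline m\sup_{W\in\mathcal M}\|W\|_{L^1}^2-1$; by Proposition~\ref{proref{H0-HV-prop}} such a $\rho$ lies in the resolvent set of every $H_W$, $W\in\mathcal M$, and in fact $H_W-\rho\ge\tfrac12 I$.

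Next I would invoke Lemma~\ref{lemref{lem-resolvent-continuity}}: the map $W\mapsto (H_W-\rho)^{-1}\in\mathcal B(L^2)$ is Lipschitz on $\mathcal M$, so $\|(H_{V_n}-\rho)^{-1}-(H_V-\rho)^{-1}\|_{\mathcal B(L^2)}\le C\|V_n-V\|_{L^1}\to 0$. Since a single point $\rho$ of norm resolvent convergence suffices (the resolvent identity propagates convergence to every $z$ in the common resolvent set), this is exactly the statement that $H_{V_n}\to H_V$ in the norm resolvent sense on $L^2$ — modulo the cosmetic point that the corollary is phrased with ``$H+V_n$'' where $H$ should read $H_0$, i.e.\ $H_{V_n}=H_0+V_n$ in the notation of Definition~\ref{d-hv}.

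Finally, for the spectral consequences I would cite the classical result (e.g.\ Kato, or Reed--Simon) that norm resolvent convergence implies: every point of $\spec(H_V)$ is the limit of points of $\spec(H_{V_n})$, no spurious spectrum accumulates, and for any eigenvalue $\lambda$ of $H_V$ that is isolated with finite multiplicity — which, by the remarks following Definition~\ref{d-hv}, is true of \emph{every} eigenvalue here, all being simple and isolated — the corresponding Riesz spectral projection converges in norm. Concretely, if $\lambda_l(V)$ is an eigenvalue and $\gamma$ a small circle around it separating it from the rest of $\spec(H_V)$, then for $n$ large $\gamma$ also separates $\lambda_l(V_n)$ from the rest of $\spec(H_{V_n})$, and the projections
\begin{displaymath}
P_l(V_n)=\frac{1}{2\pi i}\oint_\gamma (z-H_{V_n})^{-1}\,dz
\longrightarrow \frac{1}{2\pi i}\oint_\gamma (z-H_V)^{-1}\,dz=P_l(V)
\end{displaymath}
in $\mathcal B(L^2)$, whence $\lambda_l(V_n)\to\lambda_l(V)$ as well.

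The only real obstacle is bookkeeping rather than mathematics: one must make sure that the contour $\gamma$ used for $P_l(V)$ stays in the resolvent set of all $H_{V_n}$ for $n$ large, which follows from the spectral inclusion part of norm resolvent convergence and the isolatedness of $\lambda_l(V)$; and one should note that, because $\rho$ can be chosen below all of $\spec(H_W)$, the resolvents $(H_W-\rho)^{-1}$ are positive with spectrum in $[0,2]$, so translating between the single computed point $\rho$ and an arbitrary contour $\gamma$ is immediate. No estimate beyond those already established in Lemma~\ref{lemref{lem-resolvent-continuity}} is needed.
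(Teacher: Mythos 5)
Your argument matches the paper's approach exactly: the paper's one-line proof invokes \lemref{lem-resolvent-continuity} (its printed reference to ``the preceding corollary'' is a slip for the preceding lemma, as you implicitly note) to obtain norm resolvent convergence, then cites the perturbation theorem in Kato, Ch.~IV.3.4 --- precisely the Riesz-projection contour argument you spell out. Your further observations --- that a single resolvent point suffices, that ``$H+V_n$'' should be read as $H_0+V_n=H_{V_n}$, and that all eigenvalues being simple and isolated lets the classical theorem apply without caveats --- are correct and merely fill in the details the paper delegates to the reference.
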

The proof follows from the preceding corollary and a well known perturbation theorem, see
\cite[Ch.~IV.3.4]{Ka1}.
\begin{lem} \label{l-fermi}
Let Assumptions \ref{assu-effekt} and \ref{assu-fermi} be satisfied. 
\begin{enumerate}

\item[{\rm (i)}]
For any bounded set of real potentials in $L^1$ the set of
chemical potentials is also bounded. 
Additionally, this bound can be taken even uniform with respect to any subset $\mathcal{D}_1
\subset \mathcal{D}$ of distribution functions $f$ obeying in addition
\begin{equation} \label{e-uniffermi}
\sup_{f \in \mathcal{D}_1} \sup_{t \in [1,\infty[}f(t)t < \infty
\quad \text{ and} \quad
\inf_{f\in \mathcal{D}_1} f(a) >0 \quad \text{for one} \quad a \in \R.
\end{equation}

\item[{\rm (ii)}]
Let $\mathcal{C} := \{f_j \}^\infty_{j=1}$ be a sequence of functions
from $\mathcal{D}_1$ which converges 
uniformly on  bounded intervals to a function $f \in \mathcal{D}$ as $j \to \infty$.  
If $V_j \mapsto V$ in $L^1$, then  
$\lim\limits_{j\to\infty}\mu_{f_j}(V_j) = \mu_f(V)$.
\end{enumerate}
\end{lem}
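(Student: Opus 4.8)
The plan is to first establish part (i) quantitatively, then deduce part (ii) by a compactness argument. For (i), fix a bounded set $\mathcal{M}\subset L^1$ and a class $\mathcal{D}_1$ satisfying \eqref{e-uniffermi}. By \proref{H0-HV-prop}(i), the eigenvalues $\lambda_l(V)$ are bounded below uniformly in $V\in\mathcal{M}$, say $\lambda_l(V)\ge\lambda_1(V)\ge c_-$ with $c_-$ depending only on $\sup_{V\in\mathcal{M}}\|V\|_{L^1}$; and they grow at least like $\tfrac12(\varsigma_l+1)+\rho_{\mathcal M}$, hence at least like a constant times $l^2$ up to an additive constant, again uniformly in $V\in\mathcal{M}$. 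To bound the chemical potential $\mu_f(V)$ from above: if $\mu_f(V)$ were very large, then $f(\lambda_1(V)-\mu_f(V))$ alone would exceed $N/2$ once $\lambda_1(V)-\mu_f(V)$ sits in the region where $f$ is large; more precisely, using monotonicity (case (i) of $\mathcal{D}$) or the first condition in \eqref{e-uniffermi} together with $\inf_{f\in\mathcal D_1}f(a)>0$ for some fixed $a$, one sees that $\mu_f(V)\le c_- - a$ would already force $2f(\lambda_1(V)-\mu_f(V))\ge 2f(a)$, which is impossible if we push $\mu_f(V)$ past the point where the single first term overshoots $N$; running this carefully gives an upper bound $\mu_f(V)\le M_+$ depending only on $N$, $c_-$, $a$ and $\inf_{f\in\mathcal D_1}f(a)$. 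For the lower bound, use the uniform quadratic growth of the $\lambda_l$ and the decay $\sup_{t\ge1}f(t)t<\infty$ (uniform over $\mathcal{D}_1$) to show $2\sum_l f(\lambda_l(V)-E)\le \tfrac12$, say, whenever $E\le M_-$ for a suitable $M_-$; since this sum must equal $N\ge1$, we get $\mu_f(V)\ge M_-$. Both $M_\pm$ depend only on the stated data, proving the uniform boundedness.

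For part (ii), let $\mathcal{C}=\{f_j\}$ converge to $f$ uniformly on bounded intervals, and $V_j\to V$ in $L^1$. The sequence $\{f_j\}$ together with $f$ lies in a class $\mathcal{D}_1$ satisfying \eqref{e-uniffermi}: the tail condition $\sup_j\sup_{t\ge1}f_j(t)t<\infty$ follows because uniform convergence on $[1,R]$ plus the uniform tail estimate (one checks the $f_j$ inherit a common bound from the limit on a large fixed interval, and $f_j(t)t$ for $t$ large is controlled by the definition of $\mathcal D$ applied to each $f_j$ — here one should verify that the hypotheses force such a uniform bound, which is the delicate point), and $\inf_j f_j(a)>0$ holds for any $a$ with $f(a)>0$ by uniform convergence near $a$. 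Hence by (i), the numbers $\mu_j:=\mu_{f_j}(V_j)$ lie in a fixed compact interval $[M_-,M_+]$. Take any convergent subsequence $\mu_{j_k}\to\mu^*$. Since $\{V_j\}$ is $L^1$-bounded, \corref{c-eigenva} gives $\lambda_l(V_{j_k})\to\lambda_l(V)$ for each $l$. Now pass to the limit in $2\sum_l f_{j_k}(\lambda_l(V_{j_k})-\mu_{j_k})=N$: each summand converges to $f(\lambda_l(V)-\mu^*)$ (using joint continuity, local uniform convergence $f_{j_k}\to f$, and boundedness of the arguments), and the tail is controlled uniformly by $2\sum_{l>L}C/(\lambda_l(V_{j_k})-\mu_{j_k})^2$, which is small uniformly in $k$ because of the uniform quadratic lower bound on eigenvalues and the uniform compact bound on $\mu_{j_k}$. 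Dominated convergence for series then yields $2\sum_l f(\lambda_l(V)-\mu^*)=N$, so $\mu^*=\mu_f(V)$ by the uniqueness in \lemref{l-feermi}. Since every subsequential limit equals $\mu_f(V)$ and the sequence is bounded, $\mu_{f_j}(V_j)\to\mu_f(V)$.

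The main obstacle I expect is making the tail estimate genuinely uniform over the family $\{f_j\}\cup\{f\}$: the definition of $\mathcal{D}$ gives $\sup_{s\ge1}f(s)s^2<\infty$ for each fixed $f$, but for the interchange of limit and sum in (ii) one needs a single dominating bound valid for all $f_j$ simultaneously, and \eqref{e-uniffermi} only imposes $\sup_{t\ge1}f(t)t<\infty$ uniformly. One must therefore argue that local uniform convergence $f_j\to f$ on, say, $[1,R]$ for a large but fixed $R$ (chosen larger than $M_+$ minus the eigenvalue floor plus a margin), combined with the fact that for $l$ large the arguments $\lambda_l(V_j)-\mu_j$ eventually exceed $R$ uniformly in $j$, lets us bound the deep tail by the common estimate $\sup_{t\ge1}f_j(t)t\le K$ times $\sum_{l}(\lambda_l(V_j)-\mu_j)^{-1}$ — but this last series only barely diverges, so one actually needs the $t^2$-type decay on the relevant range, which must be extracted from a uniform version of Definition \ref{d-vert}(ii) for the specific sequence at hand (in the applications, $f_j$ are Fermi–Dirac functions at temperatures $T_j\to0$, for which the required uniformity is straightforward). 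Handling this carefully — splitting a fixed finite head treated by termwise convergence from a tail treated by the uniform decay — is the crux; the rest is routine continuity and the monotonicity already recorded in \lemref{l-feermi}.
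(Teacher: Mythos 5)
Your overall strategy matches the paper's: for (i) an upper bound on $\mu_f(V)$ from the monotonicity of $f$ and $\inf_{\mathcal D_1}f(a)>0$, a lower bound from the linear decay of $f$ and the growth of the eigenvalues; for (ii) a compactness argument on the chemical potentials, termwise limits for a finite head, and a uniform tail bound, concluding by uniqueness in Lemma~\ref{l-feermi}. That is exactly the paper's contradiction argument, just phrased positively.

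However, the ``main obstacle'' you flag at the end is not a real one, and the worry it causes propagates a genuine error into your proof of (ii). You claim that the series $\sum_l (\lambda_l(V_j)-\mu_j)^{-1}$ ``only barely diverges,'' and therefore that you need a uniform $t^{-2}$ decay of the $f_j$. This is false: the problem is one-dimensional, the eigenvalues of $-\frac{d^2}{dx^2}$ on $(0,1)$ with Dirichlet conditions are $\pi^2 l^2$, and by Remark~\ref{r-unifbound} together with the lower bound in \eqref{EW absch.} one has $\lambda_l(V)\ge c\,l^2 - C$ uniformly on $L^1$-bounded sets of potentials. Hence $\sum_l (\lambda_l(V_j)-\mu_j)^{-1}$ \emph{converges}, uniformly in $j$ once the $\mu_j$ are confined to a compact interval by part (i). Consequently the uniform bound $\sup_{f\in\mathcal D_1}\sup_{t\ge1}f(t)t<\infty$ stipulated in \eqref{e-uniffermi} is all that is needed to make the tail $\sum_{l>l_0}f_j(\lambda_l(V_j)-\mu_j)$ small uniformly in $j$. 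You should also note that you do not need to ``verify'' that $\{f_j\}$ inherits the uniform tail bound: the statement of (ii) already hypothesizes $\{f_j\}\subset\mathcal D_1$, so \eqref{e-uniffermi} is given, not derived. Once the spurious concern is removed, your decomposition into a finite head (treated by Corollary~\ref{c-eigenva}, local uniform convergence $f_j\to f$, and continuity of $f$) and a uniformly small tail is exactly the paper's argument. A minor slip in (i): the inequality ``$\mu_f(V)\le c_- - a$ would force $2f(\lambda_1(V)-\mu_f(V))\ge 2f(a)$'' is stated backwards; the correct direction is that $\mu_f(V)\ge \lambda_1(V)-a$ forces $\lambda_1(V)-\mu_f(V)\le a$ and hence $f(\lambda_1(V)-\mu_f(V))\ge f(a)$. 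This does not affect the substance.
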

\begin{proof}
(i) By the monotonicity of $f$, 
one has $\sum_l f(\lambda _l(V) - E) \ge k \inf_{f\in \mathcal{C}} f(a)$
if $k $ items $\lambda _l - E$ are situated below $a$. This means
$\sum_l f(\lambda _l(V) - E) \mapsto \infty$ for $E \mapsto \infty$ uniform within
$\mathcal{C}$. Thus, the chemical potential $\mu_f(V)$ have to be bounded from
above by the monotonicity of $f$ and (\ref{EW absch.}), uniformly within the class $\mathcal{C}$. 
On the other hand, if $E \le \inf_l \inf \spec(H_{V_l})-1$, then one
can estimate
\begin{displaymath}
 \sum_l f(\lambda _l(V) - E) \le  \sup_{f \in \mathcal{C}} \sup_{t \in [1,\infty[}f(t)t
\sum_l (\lambda _l(V) - E)^{-1},
\end{displaymath}
what --again by (\ref{EW absch.}) -- tends to zero for $E \mapsto -\infty$ uniformly for
 $f \in \mathcal{C}$ and uniform over sets of potentials which are
 bounded in $L^1$.
 
(ii) First, the chemical potentials are uniformly bounded, due to (i). Thus, the eigenvalues
of the operators $H_{V_j-\mu_{f_j}(V_j)}$ admit uniform bounds as in \proref{H0-HV-prop}.
Assume that the assertion was not true; then for a subsequence
$\{V_k\}_k$ one has
\begin{equation} \label{e-wid1}
|\mu_{f_k}(V_k)-\mathcal E_f(V)|> \epsilon>0.
\end{equation}
Because the chemical potentials form a bounded set, we may again pass to a subsequence $\{V_n\}_n$,
and suppose
\begin{equation} \label{e-wid2}
\lim_{n \rightarrow \infty} \mu_{f_n}(V_n)= E \neq \mu_f(V).
\end{equation}
We will now show that this leads to 
\begin{equation} \label{e-wid12}
\lim_{ n \mapsto \infty} \sum_l f_n(\lambda_{l}(V_n)-\mu_{f_n}(V_n)) = 
\sum_l f(\lambda_{l}(V)-E).
\end{equation}
First, it follows from (\ref{e-uniffermi}) and \proref{H0-HV-prop} that for any $\delta >0$ 
there is a number $l_0$ such that 
\begin{equation} \label{e-wid3}
 \sum_{l>l_0} f_n(\lambda_{l}(V_n)-\mu_{f_n}(V_n))< \delta \quad \text {and} \quad
\sum_{l>l_0} f(\lambda_{l}(V)-E) < \delta
\end{equation}
uniformly for all $n$. 
The remaining eigenvalues $\lambda_l(V_n)$ for $l \le l_0$ and all $n$ lie in a bounded
interval and, additionally, due to \corref{c-eigenva}, 
one has $\lim\limits_{n \mapsto \infty} \lambda_l(V_n)
=\lambda_l(V)$ for every $l$. Thus, according to (\ref{e-wid2}), (\ref{e-wid3}) and the
uniform convergence of $\{f_j\}_j$ to $f$ on bounded intervals as $j
\to \infty$, the term
\begin{displaymath}
\bigl | \sum_l f_n(\lambda_{l}(V_n)-\mu_{f_n}(V_n)) - 
\sum_l f(\lambda_{l}(V)-E)\bigr |
\end{displaymath}
becomes smaller than $3 \delta$ for sufficiently large $n$ and arbitrarily chosen $\delta >0$.
 This implies (\ref{e-wid12}),
but (\ref{e-wid12}) must be false: the terms on the left hand side all equal $N$, what cannot
be true for the right hand side due to (\ref{e-wid2}) and
\lemref{l-feermi}. Hence, (\ref{e-wid1})
is wrong, what proves (ii).
\end{proof}
\begin{rem} \label{r-fermicont}
{\rm 
The lemma shows in particular that the chemical potentials continuously
depend on the potential.
}
\end{rem}
\begin{thm}\label{t-W12-absch}
Let Assumptions \ref{assu-effekt} and \ref{assu-fermi} be
satisfied and let $\mathcal{M}$ be a bounded set of real potentials in $L^1$.
Then the following holds:
\begin{enumerate}

\item[{\rm (i)}]
The image $\mathcal{N}_f(\mathcal{M})$, $f \in \mathcal{D}$, is a bounded set in $W_{0}^{1,2}$.
The bound may be taken uniformly with respect to any 
set $\mathcal{D}_2 \subset \mathcal{D}$ of distribution functions $f$
which satisfy the additional conditions
\begin{equation} \label{e-uniffermi00}
\sup_{f \in \mathcal{D}_2} \sup_{t \in [0,\infty[}f(t)t(t+1) < \infty
\quad \text{ and} \quad
\inf_{f\in \mathcal{D}_2} f(a) >0 \quad \text{for one} \quad a \in \R.
\end{equation}

\item[{\rm (ii)}]
The particle density operator $\mathcal{N}_f:L^1 \mapsto
L^1$, $f \in \mathcal{D}$, is continuous.
\end{enumerate}
\end{thm}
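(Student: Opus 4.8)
\textbf{Proof plan for Theorem~\ref{t-W12-absch}.}

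The plan is to reduce both statements to a single quantitative estimate: a bound on $\|\mathcal{N}_f(V)\|_{W_0^{1,2}}$ that is explicit in the $L^1$-norm of $V$ and in the constants appearing in \eqref{e-uniffermi00}. Fix a lower cut $\rho$ below $\rho_\mathcal{M}$ (possible by \lemref{l-fermi}(i), which makes the chemical potentials uniformly bounded, so that the shifted potentials $V-\mu_f(V)$ still range over an $L^1$-bounded set). Write $\mathcal{N}_f(V) = 2\sum_l f(\lambda_l(V)-\mu_f(V))\,|\psi_l|^2$. The key idea is to test $\mathcal{N}_f(V)$ against $H_0+1$ in the $W_0^{1,2}$--$W^{-1,2}$ duality: since $\|\mathcal{N}_f(V)\|_{W_0^{1,2}} \le \underline{m}\,\|\mathcal{N}_f(V)\|_{W^{-1,2}}$ by \remref{m-remark}, and since $\|(H_0+1)^{-1/2}\|_{\mathcal{B}(L^1;L^2)}<\infty$ together with $W_0^{1,2}\hookrightarrow L^\infty$ (hence $L^1\hookrightarrow W^{-1,2}$ with the factorization used in \lemref{lem-resolvent-continuity}), it suffices to control $\|(H_0+1)^{1/2}\mathcal{N}_f(V)(H_0+1)^{1/2}\|$ or, more directly, to note that $|\psi_l|^2 = \psi_l\overline{\psi_l}$ and expand $\psi_l = (H_V-\rho)^{-1/2}\chi_l$ where $\chi_l$ is the corresponding eigenfunction of $(H_V-\rho)^{-1}$, so $\|(H_0+1)^{1/2}\psi_l\|_{L^2}^2 \le 2\|(H_V-\rho)^{-1/2}\chi_l\|$-type terms by \eqref{HV-H0-ineq}, i.e.\ $\le 2(\lambda_l(V)-\rho)^{-1}\cdot\|(H_0+1)^{1/2}(H_V-\rho)^{-1/2}\|^2 \le 4(\lambda_l(V)-\rho)^{-1}$.

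The summable-tail argument then runs as follows. By \proref{H0-HV-prop}(i) we have $\lambda_l(V)-\rho \ge \tfrac12(\varsigma_l+1)$, and $\varsigma_l \sim c\,l^2$ by \remref{r-unifbound}. Hence $\sum_l (\lambda_l(V)-\rho)^{-1} \cdot \big(\lambda_l(V)-\mu_f(V)\big)\cdot\text{(weight)}$ converges once we insert the Fermi factor. Precisely: write $f(\lambda_l(V)-\mu_f(V))\,\|(H_0+1)^{1/2}\psi_l\|_{L^2}^2 \le 4\,f(\lambda_l(V)-\mu_f(V))\,(\lambda_l(V)-\rho)^{-1}$, and for large $l$ the argument $t_l := \lambda_l(V)-\mu_f(V)$ exceeds $1$, so $f(t_l)\le \sup_{s\ge 1} f(s)s(s+1)\cdot\big(t_l(t_l+1)\big)^{-1}$; combined with $\lambda_l(V)-\rho \ge \tfrac12(\varsigma_l+1)$ this gives a bound of order $\varsigma_l^{-1}\cdot t_l^{-2}$, which is summable and controlled uniformly over $\mathcal{D}_2$ by \eqref{e-uniffermi00} and over $\mathcal{M}$ (the finitely many small-$l$ terms are handled by the crude bound $f\le \sup f <\infty$ plus $(\lambda_l(V)-\rho)^{-1}\le 2$). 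Summing over $l$ yields $\sum_l f(t_l)\,\|(H_0+1)^{1/2}\psi_l\|^2 \le C(\mathcal{M},\mathcal{D}_2)$, and since $\|\mathcal{N}_f(V)\|_{W_0^{1,2}} \lesssim \sum_l f(t_l)\,\|\,|\psi_l|^2\,\|_{W_0^{1,2}}$ and $\|\,|\psi_l|^2\,\|_{W_0^{1,2}} \lesssim \|\psi_l\|_{L^\infty}\|\psi_l'\|_{L^2} \lesssim \|\psi_l\|_{W_0^{1,2}}^{3/2}\|\psi_l\|_{L^2}^{1/2} \lesssim \|(H_0+1)^{1/2}\psi_l\|_{L^2}^{3/2}$, a final Hölder/Cauchy–Schwarz step over $l$ (splitting the $3/2$ power and using that $\sum_l f(t_l)=N/2$ is finite) delivers (i).

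For (ii), continuity of $\mathcal{N}_f:L^1\to L^1$, I would fix a convergent sequence $V_n\to V$ in $L^1$, put $\mathcal{M} := \{V_n\}\cup\{V\}$ (bounded in $L^1$), and combine the ingredients already assembled: \corref{c-eigenva} gives $\lambda_l(V_n)\to\lambda_l(V)$ and norm convergence of the eigenprojections $P_l(V_n)\to P_l(V)$ in $\mathcal{B}(L^2)$; \lemref{l-fermi}(ii) (with the constant sequence $f_j\equiv f$) gives $\mu_f(V_n)\to\mu_f(V)$, hence $f(\lambda_l(V_n)-\mu_f(V_n))\to f(\lambda_l(V)-\mu_f(V))$ for each $l$ by continuity of $f$. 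The tails are uniformly small by exactly the estimate from part (i) applied to $\mathcal{M}$ (which also bounds $\|\,|\psi_l(V_n)|^2\,\|_{W_0^{1,2}}$, hence in $L^1$, summably and uniformly in $n$). So one splits $\mathcal{N}_f(V_n)-\mathcal{N}_f(V)$ into a finite sum over $l\le l_0$, which converges to $0$ in $L^1$ term by term (the rank-one operators $|\psi_l(V_n)|^2$ converge in $L^1$ because $P_l(V_n)\to P_l(V)$ in Hilbert–Schmidt norm and the coefficients converge), plus two tails each $<\varepsilon$; a standard $3\varepsilon$ argument finishes. The main obstacle is the bookkeeping in part (i): getting the summable majorant with constants genuinely uniform over $\mathcal{D}_2$ requires carefully separating the finitely many indices where $t_l$ may be small (handled by boundedness of $f$ and of the resolvent) from the tail where condition \eqref{e-uniffermi00} and the quadratic growth of $\varsigma_l$ both kick in — and making sure the passage from the $L^2$-norm bound $\|(H_0+1)^{1/2}\psi_l\|_{L^2}$ to the genuine $W_0^{1,2}$-norm of the product $|\psi_l|^2$, via the pointwise product estimate and a final Cauchy–Schwarz over $l$, does not lose summability.
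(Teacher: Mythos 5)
Your overall blueprint for part (i) — estimate $\|\mathcal{N}_f(V)\|_{W^{1,2}_0}$ term by term, convert $\||\psi_l|^2\|_{W^{1,2}_0}$ into $\|(H_0+1)^{1/2}\psi_l\|^2_{L^2}$, and sum using the decay of $f$ together with the eigenvalue estimates — is the same as the paper's. However, there is a fatal sign error in the pivotal estimate. You claim
\[
\|(H_0+1)^{1/2}\psi_l\|_{L^2}^2 \le 4\,(\lambda_l(V)-\rho)^{-1},
\]
but the correct bound for the $L^2$-normalized eigenfunction $\psi_l$ is
\[
\|(H_0+1)^{1/2}\psi_l\|_{L^2}^2
= \|(H_0+1)^{1/2}(H_V-\rho)^{-1/2}(H_V-\rho)^{1/2}\psi_l\|^2_{L^2}
\le 2\,(\lambda_l(V)-\rho),
\]
i.e.\ \emph{growing}, not decaying, in $\lambda_l$. (Your identity $\psi_l=(H_V-\rho)^{-1/2}\chi_l$ cannot hold with both $\psi_l$ and $\chi_l$ normalized; you have dropped the factor $(\lambda_l-\rho)^{1/2}$.) This is not a cosmetic slip: with your bound the whole sum becomes
$\sum_l f(t_l)\|(H_0+1)^{1/2}\psi_l\|^2 \le 8\sum_l f(t_l) = 4N$
without ever invoking \eqref{e-uniffermi00}, which would make the hypothesis superfluous and should have been a red flag. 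With the correct bound one lands on $\sum_l f(t_l)(\lambda_l-\rho)$, and it is precisely the split $\lambda_l-\rho = (\mu_f(V)-\rho) + t_l$ plus the quadratic decay $f(t)t(t+1)\le C$ from \eqref{e-uniffermi00} that makes this sum finite; this is what the paper does and what you must do.

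A second, smaller gap: to handle the "finitely many small-$l$" terms you invoke the "crude bound $f\le \sup f<\infty$''. But for $f\in\mathcal D$ of type (i) — in particular the zero-temperature limit $f(t)=-t\,\Theta(-t)$, which the whole paper is built to accommodate — $f$ is unbounded as $t\to-\infty$ and $\sup f=\infty$. The working substitute is the normalization $\sum_l f(t_l)=N/2$: together with the uniform lower bound $\lambda_l-\rho\ge\tfrac12$ from \eqref{e-lowbound} and the uniform bound on $\mu_f(V)-\rho$ (Lemma~\ref{l-fermi}(i)), the small-$l$ contribution is automatically $\le C\cdot N$ with a constant uniform over $\mathcal{M}$ and $\mathcal{D}_2$. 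Once these two points are corrected, your refined $\||\psi_l|^2\|_{W^{1,2}_0}\lesssim\|(H_0+1)^{1/2}\psi_l\|^{3/2}_{L^2}$ plus a H\"older step is an unnecessary detour — the coarser $\||\psi_l|^2\|_{W^{1,2}_0}\le 2\|\psi_l\|^2_{W^{1,2}_0}$ used in the paper already closes the argument. Your part (ii) (finite part via norm-resolvent convergence of eigenprojections, tails controlled uniformly by part (i)) matches the paper's proof and is fine.
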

\begin{proof}
(i) For $V\in\mathcal{M}$ we get
\begin{eqnarray}\label{W12-absch-topol-alg}
\lefteqn{
\|\mathcal{N}_f(V)\|_{W^{1,2}_{0}}=
\|{2}\sum_{l}f(\lambda_l(V)-\mu_f(V))|\psi_l|^2\|_{W_{0}^{1,2}}\leq}\\
& & 
{2}\sum_{l}f(\lambda_l(V)-\mu_f(V))\||\psi_l|^2\|_{W_{0}^{1,2}} \leq
4\sum_{l}f(\lambda_l(V)-\mu_f(V)) \|\psi_l\|_{W^{1,2}_0}^2
\nonumber
\end{eqnarray}
where in the last step we used
$\||\psi|^2\|_{W^{1,2}_0}\leq 2 \|\psi\|^2_{W^{1,2}_0}$ for all $\psi\in W^{1,2}_0$.
We estimate the terms in (\ref{W12-absch-topol-alg}):
\begin{eqnarray*}
\lefteqn{
 \|\psi_l\|^2_{W_{0}^{1,2}} \le 
\underline{m} \|(H_0+1)^{\frac{1}{2}}\psi_l\|^2_{L^2} \le }\\
& & \underline{m} \|(H_0+1)^{\frac{1}{2}}
(H_V-\rho)^{-\frac{1}{2}}\|^2\|(H_V-\rho)^{\frac{1}{2}}\psi_l\|^2_{L^2}
\leq 2\underline{m}  (\lambda_l(V)-\rho),
\end{eqnarray*}
(see (\ref{HV-H0-ineq})) where $\rho < \rho_\mathcal{M}$ where
$\rho_\mathcal{M}$ given by Lemma \ref{lem-resolvent-continuity} is a uniform lower bound for the
spectra of the operators $H_V$ with $V \in \mathcal{M}$. 
Thus, (\ref{W12-absch-topol-alg}) is not larger than 
{\small
\begin{eqnarray*}
\lefteqn{\|\mathcal{N}_f(V)\|_{W^{1,2}_{0}} \le
8 \underline{m} \sum_{l}f(\lambda_l(V)-\mu_f(V))(\lambda_l(V)-\rho)
 \le}\\
& &
 8 \underline{m}
 \sum_{l}f(\lambda_l(V)-\mu_f(V))\left[(\mu_f(V)-\rho) +
 (\lambda_l(V)-\mu_f(V))\right] \le\\
& &
8 \underline{m} N |\mu_f(V)-\rho| +
8 \underline{m}\sum_{l: \,\lambda_l(V) -  \mu_f(V) \ge 0}f(\lambda_l(V)-\mu_f(V))
(\lambda_l(V)-\mu_f(V)). 
\end{eqnarray*}
}
The last sum may be estimated by 
\begin{displaymath}
\sup_{t \in [0,\infty[} f(t)t(t+1)
\sum_{l: \lambda_l(V) -  \mu_f(V) \ge 0}(\lambda_l(V)-\mu_f(V)+1)^{-1}. 
\end{displaymath}
Obviously, the condition (\ref{e-uniffermi00}) is stronger than (\ref{e-uniffermi});
thus, the set of chemical potentials  is uniformly bounded for $f \in \mathcal{C}$ and 
$V \in \mathcal{M}$. This, together with the eigenvalue estimates (\ref{EW absch.})
and \remref{r-unifbound} proves (i).

(ii) According to \corref{c-eigenva} and \lemref{l-fermi} the eigenvalues and the corresponding
chemical potentials  depend continously on $V \ni L^1$.
 Furthermore, if $\{V_j\}_j$ converges in $L^1 $ to $V$, then -- due to \corref{c-eigenva} --
the (one-dimensional) eigenprojections $P_{j,k}=\langle \cdot,f_{j,k}\rangle f_{j,k}$ 
converge to the corresponding eigenprojections $P_k=\langle \cdot,f_k\rangle f_k$.
Applying these eigenprojections to the vector $f_k$, one easily
obtains 
\begin{displaymath}
|\langle f_k,f_{j,k}\rangle| \mapsto 1 \quad \text{and} \quad \langle f_k,f_{j,k}\rangle f_{j,k}
\mapsto f_k \quad
\text{in}\; L^2  \quad \text{ for} \quad  j \mapsto \infty.
\end{displaymath}
From this it is not hard to see that $|f_{j,k}|^2 \mapsto |f_k|^2$ in $L^1$.
Observing that for sufficiently large $M$
\[
\|\sum_{l \ge M}f(\lambda_{l}(V)-\mu_f(V))\,|\psi_l(V)|^2\|_{L^1} \le
\sum_{l \ge M}f(\lambda_{l}(V)-\mu_f(V))
\]
can be made arbitrarily small uniformly over an $L^1$-bounded set of potentials $V$, one 
obtains (ii).
\end{proof}
\begin{cor} \label{c-improvecont}
Let Assumptions \ref{assu-effekt} and \ref{assu-fermi} be satisfied.
If $X$ is a Banach space continuously injecting into $L^1$ such that $W^{1,2}_0$
compactly embeds into $X$, 
then the particle density operator $\mathcal{N}_f:L^1 \mapsto X$ is well defined and
continuous.
\end{cor}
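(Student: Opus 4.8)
The plan is to deduce this from \thmref{t-W12-absch} together with the compactness hypothesis on $X$. The starting point is \thmref{t-W12-absch}(i): for any $L^1$-bounded set $\mathcal{M}$ of real potentials, $\mathcal{N}_f(\mathcal{M})$ is a bounded subset of $W^{1,2}_0$. Since $W^{1,2}_0$ continuously embeds into $L^1$ (indeed into $L^\infty$, by the embedding established in the excerpt), the particle density operator $\mathcal{N}_f$ as a map $L^1\to L^1$ has its image contained in $W^{1,2}_0$. Composing with the continuous injection $W^{1,2}_0\hookrightarrow X$ — which makes sense because $W^{1,2}_0$ compactly, hence continuously, embeds into $X$ — shows that $\mathcal{N}_f:L^1\to X$ is well defined.

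For continuity, let $\{V_j\}_j$ converge to $V$ in $L^1$. First I would observe that the sequence $\{V_j\}_j\cup\{V\}$ is $L^1$-bounded, so by \thmref{t-W12-absch}(i) the images $\mathcal{N}_f(V_j)$ and $\mathcal{N}_f(V)$ lie in a fixed bounded subset $B$ of $W^{1,2}_0$. By the compact embedding $W^{1,2}_0\hookrightarrow\hookrightarrow X$, the set $B$ is relatively compact in $X$. Now I would argue by the standard subsequence principle: to show $\mathcal{N}_f(V_j)\to\mathcal{N}_f(V)$ in $X$, it suffices to show that every subsequence has a further subsequence converging to $\mathcal{N}_f(V)$ in $X$. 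Given a subsequence, relative compactness in $X$ lets me extract a further subsequence along which $\mathcal{N}_f(V_{j_k})\to w$ for some $w\in X$. Since $X\hookrightarrow L^1$ continuously, also $\mathcal{N}_f(V_{j_k})\to w$ in $L^1$. But \thmref{t-W12-absch}(ii) gives $\mathcal{N}_f(V_{j})\to\mathcal{N}_f(V)$ in $L^1$, hence $w=\mathcal{N}_f(V)$. Thus the full sequence converges to $\mathcal{N}_f(V)$ in $X$, which is the desired continuity.

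The only point requiring any care — and hence the main (mild) obstacle — is the interplay of the three topologies: one must be sure that the $L^1$-limit identified by \thmref{t-W12-absch}(ii) genuinely coincides with the $X$-limit extracted by compactness, and this is exactly what the continuous injection $X\hookrightarrow L^1$ secures, since limits in $L^1$ are unique. No further estimates are needed; everything rides on \thmref{t-W12-absch} and a routine compactness-plus-subsequence argument.
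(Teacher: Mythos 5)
Your proposal is correct and matches the paper's proof in all essentials: both rest on \thmref{t-W12-absch}(i) to get $W^{1,2}_0$-boundedness, the compact embedding $W^{1,2}_0 \hookrightarrow\hookrightarrow X$ to extract a convergent subsequence, and \thmref{t-W12-absch}(ii) together with the continuous injection $X \hookrightarrow L^1$ to identify the limit as $\mathcal{N}_f(V)$. The paper merely phrases the continuity step as a proof by contradiction rather than via your direct subsequence-of-subsequence principle, which is the same argument in different clothing.
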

\begin{proof}
That it is well defined follows from the embedding $W^{1,2}_0 \hookrightarrow X$.
Assume that the continuity does not hold; then there must be a sequence 
$\{V_n\}_n$ converging in $L^1$ to $V$ such that
\begin{equation} \label{e-contrad}
\|\mathcal{N}_f(V_n)-\mathcal{N}_f(V)\|_X> \epsilon >0.
\end{equation}
The statement (i) of the foregoing lemma tells us that
$\{\mathcal{N}_f(V_n)\}_n$  is bounded in $W^{1,2}$. 
Thus, by the compactness of the embedding $W^{1,2} \hookrightarrow X$
there must be a subsequence $\{V_k\}_k$ such that $\mathcal{N}_f(V_k)$ converges in $X$ 
to an element $\mathcal W$. But $\mathcal{N}_f(V_k)$ converges to $\mathcal{N}_f(V)$ in $L^1$.
Thus $\mathcal{W}$ must equal $\mathcal{N}_f(V)$ by the continuous
injection $X \hookrightarrow L^1$, what contradicts (\ref{e-contrad}).
\end{proof}

To prove existence of solutions of the Kohn-Sham system we will
 introduce an appropriate subset of $L^1$ together with a suitable mapping $\Phi$ from this set
 into itself. $\Phi$ will be constructed such that the solutions to the Kohn-Sham system
 coincide with the fixed points of this mapping.
\begin{defn}\label{def-FP-mapping}
{\rm
We set 
\begin{displaymath}
L^1_N :=\left\lbrace {u}\in  L^1: u\geq0,\quad \int^1_0 u \;dx=N \right\rbrace.
\end{displaymath}
Let 
Assumptions \ref{assu-diel}, \ref{assu-dot}, \ref{assu-effekt}, \ref{assu-fermi} and 
\ref{assu-xc} be satisfied. Then we define the mapping $\Phi_f:L^1_N\rightarrow L^1_N$ by
\begin{equation}\label{5.15}
\Phi_f({u}) :=\mathcal{N}_f(\Delta E+V_{xc}({u}) - q\varphi({u}))
\end{equation}
where $\varphi({u})$ denotes the ${u}$-dependent solution to Poisson's equation
given by (\ref{4.6}).
}
\end{defn}

The task now is to verify that $\Phi_f$ has a fixed point. We will use Schauder's theorem to 
achieve this.
\begin{thm}\label{V.12}
Let 
Assumptions \ref{assu-diel}, \ref{assu-dot}, \ref{assu-effekt}, \ref{assu-fermi} and 
\ref{assu-xc} be satisfied. 
Then
\begin{enumerate}

\item[$\rm (i)$]
the pair $\{\varphi(u),u\}$ is a solution of 
the Kohn-Sham system if and only if $u \in L^1_N$  is a fixed
point of $\Phi_f$, where $\varphi(u)$ is given by (\ref{4.6})\\
 and 

\item[$\rm (ii)$]
the mapping $\Phi_f$ has a fixed point.

\end{enumerate}
\end{thm}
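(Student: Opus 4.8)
The plan is to prove part (i) by unwinding definitions, and part (ii) by verifying the hypotheses of Schauder's fixed point theorem for $\Phi_f$ on a suitable convex compact set. For part (i), observe that if $u\in L^1_N$ is a fixed point of $\Phi_f$, then by \eqref{e-5.15} and the definition of $\varphi(u)$ in \eqref{e-4.6}, the function $\varphi(u)$ solves Poisson's equation \eqref{e-poi} with the inhomogeneous boundary conditions \eqref{e-bv} (this is exactly what \eqref{e-4.6} encodes: $\phi:=\varphi(u)-\widetilde\varphi={\mathcal P}^{-1}(D-qu)\in W^{1,2}_0$ solves ${\mathcal P}\phi=D-qu$), and $u=\mathcal{N}_f(\Delta E+V_{xc}(u)-q\varphi(u))$, which is precisely Definition~\ref{IV.16}. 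Conversely, a solution pair $\{\varphi,u\}$ in the sense of Definition~\ref{IV.16} has $\varphi=\varphi(u)$ by uniqueness of the Poisson solution (Lax-Milgram), and then $u=\mathcal{N}_f(\Delta E+V_{xc}(u)-q\varphi(u))=\Phi_f(u)$; also $u\in L^1_N$ because $u\ge0$ pointwise (it is a sum of $f(\cdot)|\psi_l|^2$ with $f\ge0$) and $\int_0^1 u\,dx=N$ by the Remark following Definition~\ref{d-particleop}. So the fixed points of $\Phi_f$ correspond exactly to the solutions.

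For part (ii), I would first check that $\Phi_f$ indeed maps $L^1_N$ into itself: this is immediate since $\mathcal{N}_f(V)\ge0$ and $\int_0^1\mathcal{N}_f(V)\,dx=N$ for any real $V\in L^1$. The key structural point is that $\Phi_f$ factors as $u\mapsto V(u):=\Delta E+V_{xc}(u)-q\varphi(u)$ followed by $V\mapsto\mathcal{N}_f(V)$, and the range of $u\mapsto V(u)$ over $L^1_N$ is \emph{bounded in $L^1$}: $\Delta E$ is fixed in $L^1$ by \assuref{assu-xc}a), $V_{xc}$ is bounded from $L^1$ to $L^1$ by \assuref{assu-xc}b), and $\varphi(u)=\widetilde\varphi+{\mathcal P}^{-1}(D-qu)$ is bounded in $W^{1,2}\hookrightarrow L^\infty\hookrightarrow L^1$ because $\|{\mathcal P}^{-1}\|\le\|2/\varepsilon\|_{L^\infty}$ and $\|u\|_{W^{-1,2}}\le\|u\|_{L^1}=N$ is bounded on $L^1_N$. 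Call this bounded set $\mathcal{M}\subset L^1$. By \thmref{t-W12-absch}(i), $\mathcal{N}_f(\mathcal{M})$ is bounded in $W^{1,2}_0$; hence $K:=\overline{\mathrm{conv}}\,\mathcal{N}_f(\mathcal{M})$, closure taken in $L^1$, is a closed bounded subset of the ball in $W^{1,2}_0$ of that radius intersected with $L^1_N$ — more precisely I would take $K$ to be the (convex, $L^1$-closed) set of all $u\in L^1_N$ with $\|u\|_{W^{1,2}_0}\le R$ for the uniform bound $R$ from \thmref{t-W12-absch}(i). Then $K$ is convex and nonempty, and $\Phi_f(L^1_N)\subset K$, so in particular $\Phi_f(K)\subset K$.

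It remains to see that $K$ is compact in $L^1$ and that $\Phi_f\colon K\to K$ is continuous (for the $L^1$-topology). Compactness of $K$ follows from the compact embedding $W^{1,2}_0\hookrightarrow\hookrightarrow L^1$ (Rellich-Kondrachov in one dimension; equivalently $W^{1,2}_0\hookrightarrow\hookrightarrow C[0,1]$), together with $L^1$-closedness of $K$ — a bounded set in $W^{1,2}_0$ is relatively compact in $L^1$, and intersecting with the closed convex constraint $\{u\ge0,\ \int u=N\}$ keeps it closed. Continuity of $\Phi_f$ on $L^1_N$: the map $u\mapsto\varphi(u)$ is affine and continuous from $L^1$ (with $L^1$-topology, via $L^1\hookrightarrow W^{-1,2}$) into $W^{1,2}$, hence into $L^1$; $u\mapsto V_{xc}(u)$ is continuous $L^1\to L^1$ by \assuref{assu-xc}b); so $u\mapsto V(u)$ is continuous $L^1\to L^1$; and $V\mapsto\mathcal{N}_f(V)$ is continuous $L^1\to L^1$ by \thmref{t-W12-absch}(ii). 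Composing gives $\Phi_f$ continuous from $L^1$ into $L^1$, a fortiori continuous on $K$ with values in $K$. Schauder's fixed point theorem (a continuous self-map of a nonempty convex compact subset of a Banach space has a fixed point) then yields a fixed point $u\in K$, and by part (i) the pair $\{\varphi(u),u\}$ is a solution of the Kohn-Sham system.

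The main obstacle is the bookkeeping for part (ii): one must be careful that the set on which Schauder is applied is simultaneously convex, $L^1$-compact, and mapped into itself — the trick is precisely that $\Phi_f$ has its image inside a fixed $W^{1,2}_0$-ball (via \thmref{t-W12-absch}(i)) so one may take $K$ to be that ball intersected with $L^1_N$, which is automatically invariant. Everything else (the continuity statements, the Poisson estimates) is routine given the results already established above, so no genuinely new difficulty arises beyond assembling these pieces in the right order.
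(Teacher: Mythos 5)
Your proof is correct and follows the same overall route as the paper: part (i) by unwinding Definitions \ref{d-particleop}, \ref{def-FP-mapping} and \ref{IV.16}, and part (ii) via Schauder, with continuity of $\Phi_f$ coming from \thmref{t-W12-absch}(ii) together with the continuity of $u\mapsto \varphi(u)$ and of $V_{xc}$, and compactness coming from the $W^{1,2}_0$-bound of \thmref{t-W12-absch}(i) plus the compact embedding $W^{1,2}_0\hookrightarrow L^1$. The only difference is bookkeeping: the paper invokes the compact-map formulation of Schauder directly on the closed bounded convex set $L^1_N$ (noting $\Phi_f$ is a compact, continuous self-map of $L^1_N$), whereas you build an explicit convex $L^1$-compact invariant set $K = L^1_N \cap \{\|u\|_{W^{1,2}_0}\le R\}$ and use the compact-set formulation; these are equivalent phrasings of the same theorem, and your construction of $K$ (with the implicit weak lower semicontinuity argument for its $L^1$-closedness) is sound.
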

\begin{proof}
(i) The first part of theorem follows immediately 
from the definition of the mappings $\mathcal{N}_f$ and $\Phi_f$.

(ii) To prove the second part we note that $L^1_N$ is a closed,
bounded and convex set, which is mapped by $\Phi_f$ into itself
 (Def. \ref{def-FP-mapping}).

\textit{Continuity:} Since $L^1\hookrightarrow W^{-1,2}$, the solution $\varphi$ to
 Poisson's equation depends continuously (in $W^{1,2}_0$) on $u$. Hence, due to 
\assuref{assu-xc}, the mapping $L^1 \ni  
u\mapsto \Delta E+V_{xc}(u)-q\varphi(u)
\in L^1$ is continuous. \thmref{t-W12-absch} then implies the continuity of $\Phi_f$.

\textit{Compactness:} According to Theorem \ref{t-W12-absch}, the image of a $L^1$-bounded set
of potentials is bounded in the space $W^{1,2}$. The compactness of the embedding 
$W^{1,2}\hookrightarrow L^1$ yields the compactness of $\Phi_f$.

Thus, due to Schauder's fixed point theorem, $\Phi_f$ has a fixed
point in $L^1_N$.
\end{proof}

\section{Particle density operator and monotonicity}

In this section we want to show some additional properties of the particle density operator.
For this it is necessary to restrict some considerations to the real parts of spaces
which were up to now considered as complex ones. We indicate this by an additional
subscript $\R$, e.g. $L^\infty_\R$.
The upcoming results are based on the following theorem.

\begin{thm}\label{allg-trace-satz}
Let $H$ be a self-adjoint operator in the separable Hilbert space $\mathcal{H}$ with
 compact resolvent and let $U$ and $V$ be bounded, self-adjoint operators on 
$\mathcal{H}$. If $f:\mathbb{R}\longrightarrow\mathbb{R}$ is a Borel 
measurable function such that $f(H+U)$
and $f(H+V)$ are trace class operators, then the formula
\begin{eqnarray}\label{allg-trace-gl}
\lefteqn{
\tr([f(H+U)-f(H+V)](U-V))=}\\
& &
\sum_{k,l=1}^{\infty}(f(\lambda_k)-f(\mu_l))(\lambda_k-\mu_l)
|\langle \psi_k,\xi_l\rangle|^2
\nonumber
\end{eqnarray}
is valid. Here $\{\lambda_k\}_k$ $(\{\mu_l\}_l)$ is the sequence of eigenvalues of 
$H+U$ $(H+V)$ and $\{\psi_k\}_k$ $(\{\xi_l\}_l)$ is an -- orthornormalized -- sequence of
corresponding eigenvectors. 
\end{thm}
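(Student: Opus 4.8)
The plan is to expand both sides of \eqref{allg-trace-gl} in the spectral bases of $H+U$ and $H+V$ and to verify that the resulting double series coincide; the interchange of summation and trace is the only point demanding care. First I would fix the orthonormal eigensystems: let $H+U$ have eigenvalues $\{\lambda_k\}_k$ with eigenvectors $\{\psi_k\}_k$, and $H+V$ have eigenvalues $\{\mu_l\}_l$ with eigenvectors $\{\xi_l\}_l$; both are orthonormal bases of $\mathcal H$ since $H$ has compact resolvent and $U,V$ are bounded self-adjoint, so $H+U$ and $H+V$ are self-adjoint with compact resolvent as well. Write $A := f(H+U)-f(H+V)$ and $B := U-V$, both bounded and self-adjoint, with $A$ trace class by hypothesis. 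Since $U-V = (H+U)-(H+V)$, we have $B\psi_k - \lambda_k\psi_k = -(H+V)\psi_k$, equivalently $\langle B\psi_k,\xi_l\rangle = (\lambda_k-\mu_l)\langle\psi_k,\xi_l\rangle$; this is the algebraic identity that produces the factor $(\lambda_k-\mu_l)$.

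Next I would compute the trace using the basis $\{\psi_k\}_k$:
\begin{displaymath}
\tr(AB) = \sum_k \langle AB\psi_k,\psi_k\rangle = \sum_k \langle B\psi_k, A\psi_k\rangle,
\end{displaymath}
using self-adjointness of $A$. Inserting the resolution of the identity $\sum_l \langle\cdot,\xi_l\rangle\xi_l$ between $B\psi_k$ and $A\psi_k$, and using $A\xi_l = (f(\lambda)\text{-part})$ — more precisely $\langle A\psi_k,\xi_l\rangle = \langle f(H+U)\psi_k,\xi_l\rangle - \langle\psi_k,f(H+V)\xi_l\rangle = (f(\lambda_k)-f(\mu_l))\langle\psi_k,\xi_l\rangle$ — one gets, at least formally,
\begin{displaymath}
\tr(AB) = \sum_k \sum_l \overline{\langle B\psi_k,\xi_l\rangle}\,\langle A\psi_k,\xi_l\rangle
= \sum_{k,l}(f(\lambda_k)-f(\mu_l))(\lambda_k-\mu_l)|\langle\psi_k,\xi_l\rangle|^2,
\end{displaymath}
which is exactly the right-hand side of \eqref{allg-trace-gl} (note the summand is real, so conjugation is harmless).

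The main obstacle is justifying the two interchanges: splitting $AB$ as a product and inserting $\sum_l|\xi_l\rangle\langle\xi_l|$ between the factors, and then summing the resulting non-negative-in-pairs-but-not-termwise double series in either order. The clean way is to note that $A$ trace class forces $|A|^{1/2}$ Hilbert–Schmidt, write $AB$ through its singular value / polar decomposition, and observe that for each fixed $k$ the inner sum $\sum_l \langle B\psi_k,\xi_l\rangle\langle\xi_l, A\psi_k\rangle$ converges absolutely (it is just $\langle B\psi_k, A\psi_k\rangle$ by Parseval, since $\{\xi_l\}$ is an orthonormal basis and both $B\psi_k, A\psi_k\in\mathcal H$), while $\sum_k|\langle B\psi_k,A\psi_k\rangle| \le \|B\|\sum_k\|A\psi_k\| < \infty$ is not quite immediate — so instead one bounds $\sum_k|\langle AB\psi_k,\psi_k\rangle|$ via $|\tr(AB)| \le \|A\|_1\|B\|$ and uses that $\tr(AB)$ is basis-independent and absolutely convergent because $AB$ is trace class (product of trace class and bounded). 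Once absolute convergence of $\sum_k\langle AB\psi_k,\psi_k\rangle$ is in hand, Fubini for series applied to the absolutely summable family $(\overline{\langle B\psi_k,\xi_l\rangle}\langle A\psi_k,\xi_l\rangle)_{k,l}$ — whose absolute sum over $l$ for fixed $k$ is finite by Cauchy–Schwarz ($\le\|B\psi_k\|\|A\psi_k\|$) and whose sum over $k$ of these is controlled since $A$ is trace class — legitimizes the rearrangement into the symmetric double sum, completing the proof. I would also remark that the hypothesis that \emph{both} $f(H+U)$ and $f(H+V)$ are trace class, rather than just their difference, is what guarantees the individual pieces $f(H+U)B$ and $Bf(H+V)$ are trace class so that the splitting $\tr(AB)=\tr(f(H+U)B)-\tr(f(H+V)B)$ is itself valid, and each of those traces is then expanded analogously.
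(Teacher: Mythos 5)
The paper does not prove Theorem~\ref{allg-trace-satz}; it simply cites \cite{KNR1}, so there is nothing in the text to compare against. On its own merits: your two algebraic identities $\langle B\psi_k,\xi_l\rangle=(\lambda_k-\mu_l)\langle\psi_k,\xi_l\rangle$ and $\langle A\psi_k,\xi_l\rangle=(f(\lambda_k)-f(\mu_l))\langle\psi_k,\xi_l\rangle$ are correct (both use that $U,V$ bounded implies $D(H+U)=D(H)=D(H+V)$, so $\psi_k$ is in the domain of $H+V$), and expanding the trace in the two eigenbases is surely the intended route.

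The justification of the summation interchange, however, has a genuine gap. You correctly observe that $\sum_k\|A\psi_k\|$ need not be finite for trace class $A$: already the rank-one $A=|v\rangle\langle v|$ with $v_k=1/k$ gives $\sum_k\|Ae_k\|=\|v\|\sum_k 1/k=\infty$. But your proposed repair --- bound $\sum_k|\langle AB\psi_k,\psi_k\rangle|\le\|AB\|_1$ and then invoke Fubini for the family $\bigl(\overline{\langle B\psi_k,\xi_l\rangle}\langle A\psi_k,\xi_l\rangle\bigr)_{k,l}$ --- does not close the gap, because absolute convergence of the \emph{already-summed-over-$l$} diagonal terms $\langle AB\psi_k,\psi_k\rangle$ says nothing about absolute summability of the underlying double family, which is exactly what Fubini requires; you have asserted the summability rather than established it. The correct mechanism is the observation you relegate to a closing remark: split $A=f(H+U)-f(H+V)$ and treat each term separately. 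For the first term, $|\lambda_k-\mu_l|\,|\langle\psi_k,\xi_l\rangle|^2=|\langle B\psi_k,\xi_l\rangle|\,|\langle\psi_k,\xi_l\rangle|$, so Cauchy--Schwarz in $l$ yields
\begin{displaymath}
\sum_{k,l}|f(\lambda_k)|\,|\lambda_k-\mu_l|\,|\langle\psi_k,\xi_l\rangle|^2
\le\|U-V\|\sum_k|f(\lambda_k)|=\|U-V\|\,\|f(H+U)\|_{\mathcal B_1}<\infty,
\end{displaymath}
and the $f(\mu_l)$ term is handled symmetrically in the $\{\xi_l\}$ basis, with bound $\|U-V\|\,\|f(H+V)\|_{\mathcal B_1}$. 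This is where the hypothesis that $f(H+U)$ and $f(H+V)$ are \emph{individually} trace class (not merely their difference) is actually used; promote that remark from afterthought to the core of the convergence argument, delete the circular Fubini paragraph, and your proof is complete.
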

The proof is given in \cite{KNR1}.
\begin{cor} \label{c-monoton}
If Assumptions \ref{assu-effekt} and \ref{assu-fermi} are satisfied,
then the mappings $L^1_\R  \ni V \rightarrow \mathcal{N}_f(V) \in L^\infty_\R$ and
$W^{1,2}_{0,\R} \ni V \rightarrow \mathcal{N}_f(V) \in  W^{-1,2}_{\R}$
are monotone.
\end{cor}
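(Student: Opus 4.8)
The plan is to deduce monotonicity directly from \thmref{allg-trace-satz} by identifying the operators appropriately. Take $\mathcal{H} = L^2$ and $H = H_0$, which is self-adjoint with compact resolvent by \proref{H0-HV-prop} and \remref{r-unifbound}. For real-valued potentials $U, V \in L^1_\R$ — or in $W^{1,2}_{0,\R}$, which injects into $L^1_\R$ — these act as bounded self-adjoint multiplication operators on $L^2$, so $H_0 + U = H_U$ and $H_0 + V = H_V$ in the sense of \defnref{d-hv}. First I would apply the distribution function $f \in \mathcal{D}$, suitably composed with the shift by the chemical potential, to $H_U$ and $H_V$; the point is that the operator $\mathcal{N}_f(V)$ from \defnref{d-particleop} is exactly the integral kernel restricted to the diagonal of the trace-class operator $2\,g(H_V)$ where $g(s) := f(s - \mu_f(V))$, so that $\langle \mathcal{N}_f(V), w \rangle = \tr(2 g(H_V) M_w)$ for $w \in L^\infty_\R$, $M_w$ being multiplication by $w$. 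I must check the trace-class hypothesis: by the eigenvalue estimate (\ref{EW absch.}) and the growth condition defining $\mathcal{D}$, the eigenvalues of $g(H_V)$ decay like $\lambda_l(V)^{-2} \sim l^{-4}$ (or terminate, in case (i)), so $g(H_V)$ is indeed trace class.

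Next I would pair the difference $\mathcal{N}_f(V) - \mathcal{N}_f(U)$ against $V - U$ itself. Writing $w = V - U \in L^\infty_\R$ (for the first mapping) one gets
\begin{displaymath}
\langle \mathcal{N}_f(V) - \mathcal{N}_f(U), V - U \rangle = 2\,\tr\bigl( [g(H_V) - g(H_U)](V - U) \bigr).
\end{displaymath}
Here there is a subtlety: the chemical potential shift differs on the two sides, so $g$ is not literally one fixed function. This is the point I expect to be the main obstacle. The clean way around it is to observe that, thanks to \lemref{l-feermi}, the map $E \mapsto f(\lambda - E)$ combined with the constraint $2\sum_l f(\lambda_l - E) = N$ forces the chemical potential to move in the ``right'' direction; concretely, one shows the bilinear form
\begin{displaymath}
\sum_{k,l}(f(\lambda_k - \mu_f(V)) - f(\mu_l - \mu_f(U)))(\lambda_k - \mu_l)|\langle\psi_k,\xi_l\rangle|^2 \ge 0,
\end{displaymath}
using that $f$ is monotone nonincreasing in its argument together with the normalization $\sum f = N/2$ on both sides to control the cross terms coming from the differing $\mu$'s. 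Alternatively, and perhaps more transparently, I would apply \thmref{allg-trace-satz} with the \emph{same} Borel function $\tilde f(s) := f(s)$ but to the shifted operators $H_V - \mu_f(V)$ and $H_U - \mu_f(U)$, regarding the shifts as part of $U$ and $V$: i.e. set $\hat U := (V - \mu_f(V))$ and $\hat V := (U - \mu_f(U))$ as bounded self-adjoint operators. Then $\hat U - \hat V = (V - U) - (\mu_f(V) - \mu_f(U))I$, and the trace of $[f(H_0 + \hat U) - f(H_0 + \hat V)](\hat U - \hat V)$ is manifestly $\ge 0$ by (\ref{allg-trace-gl}) since each summand $(f(\lambda_k) - f(\mu_l))(\lambda_k - \mu_l)$ is $\le 0$ (monotonicity) while $|\langle\psi_k,\xi_l\rangle|^2 \ge 0$ — wait, that gives the wrong sign, so in fact one gets $\tr([f(H_0+\hat U) - f(H_0+\hat V)](\hat U - \hat V)) \le 0$, i.e.
\begin{displaymath}
\tr\bigl([f(H_0+\hat U) - f(H_0+\hat V)]((V-U) - (\mu_f(V) - \mu_f(U))I)\bigr) \le 0.
\end{displaymath}

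Finally I would unwind this inequality. The term involving $(\mu_f(V) - \mu_f(U))I$ contributes $(\mu_f(V) - \mu_f(U))\,\tr(f(H_0+\hat U) - f(H_0+\hat V)) = (\mu_f(V) - \mu_f(U)) \cdot (N/2 - N/2) = 0$ by the defining normalization of the chemical potentials (\lemref{l-feermi}). Hence the whole negativity falls on the remaining piece:
\begin{displaymath}
\tr\bigl([f(H_0+\hat U) - f(H_0+\hat V)](V - U)\bigr) \le 0,
\end{displaymath}
which says $\tr\bigl([g_V(H_V) - g_U(H_U)](V-U)\bigr) \le 0$ with $g_V(\cdot) = f(\cdot - \mu_f(V))$, $g_U(\cdot) = f(\cdot - \mu_f(U))$, and this is precisely $\frac{1}{2}\langle \mathcal{N}_f(U) - \mathcal{N}_f(V), V - U\rangle \le 0$, i.e. $\langle \mathcal{N}_f(V) - \mathcal{N}_f(U), V - U\rangle \ge 0$, which is monotonicity of $V \mapsto \mathcal{N}_f(V)$. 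The same computation delivers the $W^{1,2}_{0,\R} \to W^{-1,2}_\R$ statement verbatim, since the dual pairing there restricted to $V, U \in W^{1,2}_{0,\R} \hookrightarrow L^1_\R \hookrightarrow L^\infty$-test-functions reduces to the $L^1$–$L^\infty$ pairing used above. The one genuinely delicate bookkeeping step, which I would spell out carefully, is justifying that $f(H_0 + \hat U) M_{V-U}$ is trace class so that the rearrangement of the trace as a double sum in (\ref{allg-trace-gl}) is legitimate; this again follows from the eigenvalue decay in \proref{H0-HV-prop} combined with the $\mathcal{D}$-growth bound and the uniform boundedness of $V - U$.
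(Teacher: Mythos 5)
Your overall plan is the paper's: apply \thmref{allg-trace-satz} to the shifted potentials $U-\mu_f(U)$ and $V-\mu_f(V)$, note that the scalar shift $\mu_f(U)-\mu_f(V)$ contributes nothing to the trace because both operators have trace $N/2$, and read off the sign from (\ref{allg-trace-gl}) using the monotone decrease of $f$. But there is a real gap right at the start. You assert that a real potential $U\in L^1_\R$ ``acts as a bounded self-adjoint multiplication operator on $L^2$''; this is false, since multiplication by an $L^1$ function is in general unbounded on $L^2$, and so \thmref{allg-trace-satz} does not apply directly to $L^1$ potentials. The paper's proof first runs exactly your computation for $U,V\in L^\infty_\R$ (where the hypotheses of \thmref{allg-trace-satz} genuinely hold) and then extends to $L^1_\R$ by approximation: take $\{U_n\},\{V_n\}\subset L^\infty_\R$ with $U_n\to U$, $V_n\to V$ in $L^1$, use the continuity $\mathcal{N}_f:L^1\to L^\infty$ from \corref{c-improvecont} to pass to the limit in $\int(\mathcal{N}_f(U_n)-\mathcal{N}_f(V_n))(U_n-V_n)\,dx\le 0$, and conclude. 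That approximation step is not optional --- without it your argument only covers $L^\infty_\R\subset L^1_\R$.

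There is also a sign slip in your final unwinding. With your notation $\hat{U}=V-\mu_f(V)$, $\hat{V}=U-\mu_f(U)$, the operator $2f(H_0+\hat{U})=2f(H_V-\mu_f(V))$ has diagonal kernel $\mathcal{N}_f(V)$, and $2f(H_0+\hat{V})$ corresponds to $\mathcal{N}_f(U)$. Hence $\tr\bigl([f(H_0+\hat{U})-f(H_0+\hat{V})](V-U)\bigr)=\frac{1}{2}\int(\mathcal{N}_f(V)-\mathcal{N}_f(U))(V-U)\,dx$, not $\frac{1}{2}\langle\mathcal{N}_f(U)-\mathcal{N}_f(V),V-U\rangle$ as you wrote. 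So the inequality you have actually established is $\langle\mathcal{N}_f(V)-\mathcal{N}_f(U),V-U\rangle\le 0$, i.e.\ $-\mathcal{N}_f$ is monotone --- which is exactly what the paper's proof shows and what \thmref{c-unique} requires, so that $\phi\mapsto -\frac{d}{dx}\varepsilon\frac{d}{dx}\phi+q\mathcal{N}_f(\Delta E-q\widetilde{\varphi}-q\phi)$ is monotone in $\phi$. Your last line reverses this to $\ge 0$, which is the wrong conclusion and would make the uniqueness argument in \thmref{c-unique} fail.
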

\begin{proof}
We specify the Hilbert space $\mathcal{H}$ to $L^2$ and first consider potentials 
$U,V \in L^\infty_\R$, which are identified with the induced
multiplication operators on $L^2$. Replacing $U,V$ in the
preceding theorem by $U-\mu_f(U), V-\mu_f(V)$ and observing
\begin{eqnarray*}
& &\tr\bigl ([f(H+U-\mu_f(U))-f(H+V-\mu_f(V))]
(\mu_f(U)-\mu_f(V))\bigr ) =\\
& &\tr\bigl ([f(H+U-\mu_f(U))-f(H+V-\mu_f(V))]\bigr)
(\mu_f(U)-\mu_f(V))=0
\end{eqnarray*}
one obtains 
\begin{eqnarray*}
\lefteqn{
\int_0^1 \bigl(\mathcal{N}_f(U)-\mathcal{N}_f(V)\bigr)(U-V) \,dx =}\\
& & \hspace{-5mm}
{2}\tr\bigl ([f(H+U-\mu_f(U))-f(H+V-\mu_f(V))](U-V)\bigr) =
\nonumber \\
& & \hspace{-5mm}
{2}\tr\bigl ([f(H+U-\mu_f(U))-f(H+V-\mu_f(V))]
(U-\mu_f(U)-V+\mu_f(V))\bigr).
\nonumber
\end{eqnarray*}
In view of (\ref{allg-trace-gl}) the right hand side is negative because the distribution
function $f$ is monotonously decreasing.
 
Let now $U,V \in L^1_\R$ be arbitrary. If $\{U_n\}_n, \{V_n\}_n$ are sequences from
 $L^\infty_\R$ which converge to $U,V$ in $L^1$, respectively,
 then, on one hand, $\mathcal{N}_f(U_n) \mapsto 
\mathcal{N}_f(U)$ and $\mathcal{N}_f(V_n) \mapsto \mathcal{N}_f(V)$ in $L^\infty_\R$, due to 
\corref{c-improvecont}. Because one already knows that 
$\int_0^1 \bigl(\mathcal{N}_f(U_n)-\mathcal{N}_f(V_n)\bigr)(U_n-V_n) \,dx \le 0$, one obtains
$\int_0^1 \bigl(\mathcal{N}_f(U)-\mathcal{N}_f(V)\bigr)(U-V) \,dx \le 0$. 

The $W^{1,2}_{0,\R} \leftrightarrow W^{-1,2}_{\R}$ duality extends the $L^\infty_\R
 \leftrightarrow L^1_\R $ duality; thus, the second assertion follows from the first. 
\end{proof}
\begin{thm} \label{c-unique}
Let 
Assumptions \ref{assu-diel}, \ref{assu-dot},
\ref{assu-effekt} \ref{assu-fermi} and \ref{assu-xc} a) be satisfied.
If the correlation term $V_{xc}$ is absent, then the Kohn-Sham system
has a unique solution $\{\varphi,u\}$. 
\end{thm}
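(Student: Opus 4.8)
The plan is to show that without the exchange--correlation term the fixed point equation for $u$ becomes a monotone equation and hence can have at most one solution; existence is already guaranteed by \thmref{V.12}. First I would reformulate the problem in terms of the electrostatic potential. With $V_{xc}\equiv 0$ the effective potential is $V=\Delta E - q\varphi$, and a solution corresponds to a $\varphi\in W^{1,2}$ solving $\mathcal P(\varphi-\widetilde\varphi) = D - q\,\mathcal N_f(\Delta E - q\varphi)$; equivalently, writing $\phi=\varphi-\widetilde\varphi\in W^{1,2}_0$, one has $\mathcal P\phi + q\,\mathcal N_f(\Delta E - q\widetilde\varphi - q\phi) = D$ in $W^{-1,2}$. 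So it suffices to prove that the map $\phi\mapsto \mathcal P\phi + q\,\mathcal N_f(\Delta E - q\widetilde\varphi - q\phi)$ is injective on $W^{1,2}_0$.

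The injectivity will follow from strict monotonicity. Suppose $\phi_1,\phi_2\in W^{1,2}_0$ are two solutions and set $w=\phi_1-\phi_2$. Pairing the difference of the two equations with $w$ gives
\begin{displaymath}
\langle \mathcal P w, w\rangle + q\,\langle \mathcal N_f(V_1) - \mathcal N_f(V_2),\, w\rangle = 0,
\end{displaymath}
where $V_i = \Delta E - q\widetilde\varphi - q\phi_i$, so that $V_1 - V_2 = -q\,w$. Hence the second term equals $-\langle \mathcal N_f(V_1)-\mathcal N_f(V_2),\, V_1-V_2\rangle$, which is $\ge 0$ by the monotonicity of $\mathcal N_f$ as a map $W^{1,2}_{0,\R}\to W^{-1,2}_{\R}$ established in \corref{c-monoton}. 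Therefore $\langle \mathcal P w,w\rangle\le 0$, and since $\langle \mathcal P w,w\rangle = \int_0^1 \varepsilon |w'|^2\,dx \ge \|2/\varepsilon\|_{L^\infty}^{-1}\|w\|_{W^{1,2}_0}^2$ (coercivity of $\mathcal P$, as used for the Lax--Milgram argument earlier), we conclude $w=0$, i.e. $\phi_1=\phi_2$. Then $\varphi$ is unique, and $u=\mathcal N_f(\Delta E - q\varphi)$ is determined by $\varphi$, so the pair $\{\varphi,u\}$ is unique.

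The main obstacle, and the point that needs care, is that \corref{c-monoton} only yields monotonicity, not strict monotonicity, of $\mathcal N_f$; the strictness in the final step comes entirely from the coercivity of the Poisson operator $\mathcal P$, so one must be sure the two contributions have the right signs and that the pairing manipulations are legitimate in the $W^{1,2}_{0,\R}\leftrightarrow W^{-1,2}_\R$ duality (which is exactly the setting in which \corref{c-monoton} is phrased, after noting $w$ and $V_1-V_2$ are real). A secondary point is bookkeeping: one should check that a solution in the sense of \defnref{IV.16} really does produce a potential $V_i\in L^1_\R$ with the eigenfunction expansion to which \corref{c-monoton} applies — but this is immediate since $\Delta E\in L^1$, $\widetilde\varphi\in W^{1,2}\hookrightarrow L^\infty\subset L^1$, $\varphi\in W^{1,2}\subset L^1$, and $u\in L^1$ is real and non-negative. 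No compactness or fixed-point machinery is needed here beyond what \thmref{V.12} already provides for existence.
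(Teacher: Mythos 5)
Your proposal is correct and follows essentially the same route as the paper: reduce the system to a single equation for $\phi\in W^{1,2}_{0,\R}$, then combine the (anti)monotonicity of $V\mapsto\mathcal N_f(V)$ from \corref{c-monoton} (whose proof shows $\langle\mathcal N_f(U)-\mathcal N_f(V),U-V\rangle\le 0$, exactly the sign you use) with the coercivity of $\mathcal P$ to get strong monotonicity of the composite operator. The only cosmetic difference is that the paper invokes the Browder--Minty theorem (Zeidler, Ch.~26.2) to obtain existence and uniqueness in one stroke, whereas you unroll the strong-monotonicity argument directly to get injectivity and lean on \thmref{V.12} separately for existence.
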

\begin{proof}
It is not hard to see that in this case the system can be written as one equation
for the electrostatic potential in the real space $W^{-1,2}_{\R}$, namely 
\begin{equation} \label{e-keinVxc}
-\frac {d}{dx}\varepsilon \frac {d}{dx} \phi +q\mathcal{N}_f(\Delta E
- q\widetilde{\varphi} - q\phi)= D
\end{equation}
where $\varphi = \widetilde{\varphi} + \phi$. Since the operator $-\frac {d}{dx}\varepsilon
 \frac {d}{dx}:W^{1,2}_{0,\R} \mapsto W^{-1,2}_{\R}$ is strongly monotone, the operator  
$-\frac {d}{dx}\varepsilon \frac {d}{dx} +q \mathcal{N}_f(\Delta E-q\widetilde{\varphi} - q\;\cdot):W^{1,2}_{0,\R} \mapsto W^{-1,2}_{\R}$ is
also strongly monotone by the foregoing corollary. Additionally, this latter operator is
continuous, and hence (\ref{e-keinVxc}) has a unique solution by the theory of monotone 
operators, see \cite{zeid} Ch.26.2.
\end{proof}
\begin{cor} \label{c-apriori}
Let the assumptions of Theorem \ref{c-unique} be satisfied. 
If $\{\varphi,u\}$ is a solution of the Kohn-Sham system, then the electrostatic potential 
$\phi = \varphi -\widetilde{\varphi}$ satisfies the following a priori estimate:
\begin{displaymath}
\|\phi \|_{W^{1,2}_0}\le \frac {1}{M} \Bigl (\|D\|_{W^{-1,2}_0}+  
\gamma_{L^1;W^{-1,2}} N q\Bigr ),
\end{displaymath}
where M is the monotonicity constant for the operator 
$-\frac {d}{dx} \varepsilon \frac {d}{dx}: W^{1,2}_{0,\R} \mapsto W^{-1,2}_{\R}$ and 
$\gamma_{L^1;W^{-1,2}}$ is the norm of the embedding operator $L^1 \hookrightarrow W^{-1,2}$.
\end{cor}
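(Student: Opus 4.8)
The plan is to extract the a priori estimate directly from the fixed-point equation \eqref{e-keinVxc} by testing it against $\phi$ itself and exploiting strong monotonicity. Since $\{\varphi,u\}$ solves the Kohn-Sham system with $V_{xc}$ absent, \thmref{V.12}(i) together with the reformulation in the proof of \thmref{c-unique} tells us that $\phi = \varphi - \widetilde\varphi \in W^{1,2}_{0,\R}$ satisfies
\begin{displaymath}
-\frac{d}{dx}\varepsilon\frac{d}{dx}\phi + q\,\mathcal{N}_f(\Delta E - q\widetilde\varphi - q\phi) = D
\end{displaymath}
in $W^{-1,2}_{\R}$. Denote by $\mathcal{A}$ the operator $v \mapsto -\frac{d}{dx}\varepsilon\frac{d}{dx}v + q\,\mathcal{N}_f(\Delta E - q\widetilde\varphi - q v)$ from $W^{1,2}_{0,\R}$ to $W^{-1,2}_{\R}$, which the proof of \thmref{c-unique} established is strongly monotone with monotonicity constant $M$ (the monotonicity constant of the Poisson operator, since $\mathcal{N}_f$ only adds a monotone, hence nonnegative, contribution after the sign $q>0$ is accounted for).

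The key computation is then: apply the defining relation to the element $\phi$ and pair with $\phi$. Strong monotonicity applied with the pair $\phi$ and $0$ gives
\begin{displaymath}
M\|\phi\|_{W^{1,2}_0}^2 \le \langle \mathcal{A}\phi - \mathcal{A}0,\ \phi\rangle = \langle D - \mathcal{A}0,\ \phi\rangle \le \bigl(\|D\|_{W^{-1,2}_0} + \|\mathcal{A}0\|_{W^{-1,2}_0}\bigr)\|\phi\|_{W^{1,2}_0},
\end{displaymath}
where I used $\mathcal{A}\phi = D$. It remains to bound $\|\mathcal{A}0\|_{W^{-1,2}_0}$. Since the Poisson part of $\mathcal{A}$ vanishes at $0$, one has $\mathcal{A}0 = q\,\mathcal{N}_f(\Delta E - q\widetilde\varphi)$, and $\|\mathcal{N}_f(\cdot)\|_{L^1} = N$ by the remark following \defnref{d-particleop}; hence $\|\mathcal{A}0\|_{W^{-1,2}_0} \le q N \gamma_{L^1;W^{-1,2}}$ via the embedding $L^1 \hookrightarrow W^{-1,2}$. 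Dividing through by $M\|\phi\|_{W^{1,2}_0}$ (trivial if $\phi = 0$) yields exactly
\begin{displaymath}
\|\phi\|_{W^{1,2}_0} \le \frac{1}{M}\bigl(\|D\|_{W^{-1,2}_0} + \gamma_{L^1;W^{-1,2}}\, N q\bigr).
\end{displaymath}

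I expect no genuine obstacle here; this is a routine energy estimate once the monotone reformulation of the previous theorem is in hand. The only points needing mild care are bookkeeping ones: confirming that the monotonicity constant $M$ of the full operator $\mathcal{A}$ may indeed be taken to be that of the bare Poisson operator (which follows because $q>0$ and $\mathcal{N}_f$ is monotone, so the $\mathcal{N}_f$-term contributes a nonnegative amount to $\langle \mathcal{A}\phi - \mathcal{A}\psi, \phi - \psi\rangle$), and making sure the duality pairing is the real one so that \corref{c-monoton} applies. Everything else is Cauchy–Schwarz in the $W^{1,2}_0 \leftrightarrow W^{-1,2}$ pairing and the mass normalization $\int_0^1 \mathcal{N}_f(V)\,dx = N$.
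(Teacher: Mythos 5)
Your proof is correct and takes essentially the same route as the paper's: test the strongly monotone operator $\mathcal A$ against $\phi$ and $0$, use $\mathcal A\phi = D$, bound $\mathcal A0 = q\mathcal N_f(\cdot)$ via the embedding $L^1\hookrightarrow W^{-1,2}$ and the normalization $\|\mathcal N_f(V)\|_{L^1}=N$. The only cosmetic difference is that the paper's proof keeps $V_{xc}(u)$ inside $\mathcal A$ (treating $u$ as frozen), which makes the bound visibly independent of $V_{xc}$, while you specialize to the $V_{xc}$-absent case implicit in the corollary's hypothesis.
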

\begin{proof}
Clearly, $\phi$ satisfies the equation
\begin{displaymath}
-\frac {d}{dx}\varepsilon \frac {d}{dx} \phi \;
+ q\mathcal{N}_f(\Delta E+V_{xc}(u) - q\widetilde{\varphi} -q\phi)= D,
\end{displaymath}
which can be regarded as an equation in $W^{-1,2}_{\R}$, due to \assuref{assu-dot} 
and the fact that the dielectric permittivity matrix $\varepsilon$ has real entries, cf.
\assuref{assu-diel}.
Considering $u$ as fixed, the left hand side is a strongly monotone, continuous operator
when acting on $\phi$. We denote it for brevity by $\mathcal{A}$. Its monotonicity constant
is at least $M$.
Using the strong monotonicity, we may estimate:
\begin{displaymath}
\|\mathcal{A}\phi-\mathcal{A} 0\|_{W^{-1,2}_{\R}} \|\phi \|_{W^{1,2}_{0,\R}} 
\ge \big |< \mathcal{A}\phi-\mathcal{A}0,\phi >\big |
\ge M \|\phi \|^2_{W^{1,2}_{0,\R}},
\end{displaymath}
what leads to
\begin{eqnarray*}
\lefteqn{
\|\phi \|_{W^{1,2}_{0,\R}}\le 
\frac {1}{M} \Bigl (\|D\|_{W^{-1,2}_\R}+
q \|\mathcal{N}_f(\Delta E +V_{xc}(u)-q\widetilde{\varphi})\|_{W^{-1,2}_{\R}}\Bigr ) \le}
\\
& &
\le \frac {1}{M} \Bigl (\|D\|_{W^{-1,2}_{\R}}+ q
\gamma_{L^1;W^{-1,2}_{\R}} \| \mathcal{N}_f(\Delta E +V_{xc}(u) -q 
\widetilde{\varphi})\|_{L^1_\R}  \Bigr ).
\end{eqnarray*}
Obviously, $\| \mathcal{N}_f(\Delta E +V_{xc}(u) -q \widetilde{\varphi})\|_{L^1_\R}$ equals $N$,
what gives the assertion.
\end{proof}
\begin{rem} \label{r-aprioindep}
{\rm
Note that this estimate depends in no way on the distribution function $f$
(within the class of monotonously decreasing functions, of course).
}
\end{rem}

\section{Convergence to zero temperature}

In the following we introduce the function $\theta$ by defining
\begin{displaymath}
\theta(x)=\left\{
\begin{array}{r@{\quad:\quad}l}
1 & x \le 1\\
x & x > 1. 
\end{array}\right.
\end{displaymath}
We start with the following technical lemma.
\begin{lem}\label{VII.1}
Let Assumptions \ref{assu-effekt} and \ref{assu-fermi} be
satisfied. Further, let $\mathcal{C}:=\{f_j\}^\infty_{j=1} \subset
\mathcal{D}_2$, cf. Theorem \ref{t-W12-absch},
and $f \in \mathcal{D}$ such that
\begin{equation} \label{e-convweight}
\lim_{j \mapsto \infty}\sup_{x \in [a,\infty[}|f_j(x)-f(x)|\theta(x)=0
\end{equation}
holds for every $a \in ]-\infty,-1[$. If $\{V_j\}^\infty_{j=1}$, $V_j
\in L^1$ converges to the real potential $V \in L^1$ in $L^1$, then 
$\lim\limits_{j\to\infty}\mathcal{N}_{f_j}(V_j) = \mathcal{N}_f(V)$ in $L^1$.
\end{lem}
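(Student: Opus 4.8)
The plan is to mimic the convergence argument used in the proof of \thmref{t-W12-absch}(ii) and \corref{c-improvecont}, but now tracking the simultaneous variation of the potential \emph{and} of the distribution function, with the weight $\theta$ doing the job of controlling the high-energy tail. First I would argue by contradiction: suppose $\|\mathcal{N}_{f_j}(V_j)-\mathcal{N}_f(V)\|_{L^1} > \epsilon > 0$ along some subsequence. Since $\{V_j\}_j$ is $L^1$-bounded and $\mathcal{C}\subset\mathcal{D}_2$ satisfies the uniform conditions (\ref{e-uniffermi00}), \thmref{t-W12-absch}(i) gives a uniform $W^{1,2}_0$-bound on $\{\mathcal{N}_{f_j}(V_j)\}_j$, so by compactness of $W^{1,2}_0 \hookrightarrow L^1$ I may pass to a further subsequence along which $\mathcal{N}_{f_j}(V_j)$ converges in $L^1$ to some $\mathcal{W}$; it suffices to show $\mathcal{W}=\mathcal{N}_f(V)$.

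Next I would establish the ingredients of the limit. By \corref{c-eigenva}, $\lambda_l(V_j)\to\lambda_l(V)$ for each $l$, and the one-dimensional eigenprojections converge in $\mathcal{B}(L^2)$, whence $|\psi_l(V_j)|^2 \to |\psi_l(V)|^2$ in $L^1$ for each $l$ (exactly the argument in the proof of \thmref{t-W12-absch}(ii)). By \lemref{l-fermi}(ii), applied with the uniformly-convergent-on-bounded-intervals sequence $\{f_j\}$ (note (\ref{e-convweight}) on $[a,\infty[$ for $a<-1$ together with the trivial equality $\theta\equiv 1$ on $]-\infty,1]$ forces $f_j\to f$ uniformly on every bounded interval), one gets $\mu_{f_j}(V_j)\to\mu_f(V)=:\mu$. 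So termwise, $f_j(\lambda_l(V_j)-\mu_{f_j}(V_j))\,|\psi_l(V_j)|^2 \to f(\lambda_l(V)-\mu)\,|\psi_l(V)|^2$ in $L^1$ for each fixed $l$.

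The remaining and main point is the uniform smallness of the tail, and this is where the weight $\theta$ and the class $\mathcal{D}_2$ are essential. For $l$ large, the eigenvalue estimate (\ref{EW absch.}) together with \remref{r-unifbound} shows $\lambda_l(V_j)-\mu_{f_j}(V_j)$ grows like $\varsigma_l$, uniformly in $j$, and in particular exceeds $1$ once $l\ge l_0$; the $L^1$-norm of $|\psi_l(V_j)|^2$ equals $1$. Writing $t_{l,j}:=\lambda_l(V_j)-\mu_{f_j}(V_j)$, for $l\ge l_0$ I bound $f_j(t_{l,j}) = \bigl(f_j(t_{l,j})\theta(t_{l,j})\bigr)/t_{l,j} \le \bigl(\sup_{x\ge a}|f_j(x)|\theta(x)\bigr)/t_{l,j}$, and $\sup_{x\ge a}|f_j(x)|\theta(x)$ is bounded uniformly in $j$ (by (\ref{e-convweight}) plus $f\in\mathcal{D}$), while $\sum_{l\ge l_0} t_{l,j}^{-1}$ is a tail of a convergent series whose convergence is uniform in $j$ because of (\ref{EW absch.}); alternatively one invokes directly that $\sup_{f\in\mathcal{D}_2}\sup_{t\ge 0}f(t)t(t+1)<\infty$ gives the even stronger $\sum_l f_j(t_{l,j}) t_{l,j}<\infty$ uniformly. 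Hence $\sum_{l\ge M} f_j(t_{l,j})\,\||\psi_l(V_j)|^2\|_{L^1}$ and the analogous sum for $f,V$ can be made $<\delta$ uniformly in $j$ by choosing $M$ large. Combining the finitely many convergent terms with the uniformly small tails, $\|\mathcal{N}_{f_j}(V_j)-\mathcal{N}_f(V)\|_{L^1}<3\delta$ for $j$ large, so $\mathcal{W}=\mathcal{N}_f(V)$, contradicting the choice of the subsequence. I expect the tail estimate — specifically, checking that the bound $\sup_{x\ge a}|f_j(x)|\theta(x)$ is uniform in $j$ and marrying it to the $j$-uniform summability of $\sum t_{l,j}^{-1}$ coming from (\ref{EW absch.}) — to be the only genuinely delicate step; everything else is a direct reprise of the arguments already carried out for \thmref{t-W12-absch} and \lemref{l-fermi}.
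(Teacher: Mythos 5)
Your argument is correct, but it takes a genuinely different route from the paper's. The paper first reduces via the triangle inequality to bounding $\|\mathcal{N}_{f_j}(V_j) - \mathcal{N}_f(V_j)\|_{L^1}$ (the term $\|\mathcal{N}_f(V_j) - \mathcal{N}_f(V)\|_{L^1}$ being controlled by \thmref{t-W12-absch}(ii)), and then treats this difference as a functional-calculus object: the $L^1$-norm is written as a supremum of traces against $W\in L^\infty$, the trace norm of $f_j(H_{V_j}-\mu_{f_j}(V_j)) - f(H_{V_j}-\mu_f(V_j))$ is factored as an operator norm of $\bigl(f_j(H_{V_j}-\mu_{f_j}(V_j)) - f(H_{V_j}-\mu_f(V_j))\bigr)(H_{V_j}-\rho)$ times the trace norm of $(H_{V_j}-\rho)^{-1}$, and the operator norm is bounded by a pointwise supremum over the spectrum which is split into the two pieces $[f_j - f](\,\cdot-\mu_{f_j}(V_j))$ (controlled by (\ref{e-convweight})) and $f(\,\cdot-\mu_{f_j}(V_j)) - f(\,\cdot-\mu_f(V_j))$ (controlled by continuity of $f$ and \lemref{l-fermi}). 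You instead work directly with the eigenfunction expansion: termwise $L^1$-convergence of each $f_j(\lambda_l(V_j)-\mu_{f_j}(V_j))|\psi_l(V_j)|^2$ for fixed $l$ (via \corref{c-eigenva}, \lemref{l-fermi} and the $\theta\equiv 1$ observation on bounded intervals), plus a $j$-uniform tail estimate coming from (\ref{e-convweight}) or, alternatively, from the $\mathcal{D}_2$-bound and the eigenvalue growth in (\ref{EW absch.}). The two approaches cost about the same; the paper's is slicker in that the functional-calculus packaging avoids tracking eigenfunctions, while yours stays closer to the hands-on argument already used for \thmref{t-W12-absch}(ii). One superfluity in your write-up: the contradiction-plus-compactness framing adds nothing. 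Once you have termwise $L^1$-convergence of the first $M$ summands together with $j$-uniform smallness of the tails $\sum_{l\ge M}$, you already have $\|\mathcal{N}_{f_j}(V_j)-\mathcal{N}_f(V)\|_{L^1}<3\delta$ for $j$ large \emph{directly}, with no need to extract a convergent subsequence via the $W^{1,2}_0$-bound and compactness.
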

\begin{proof}
One has the estimate
\begin{eqnarray*}
\lefteqn{
\|\mathcal{N}_{f_j}(V_j)- \mathcal{N}_f(V_{j})\|_{L^1} =}\\
& &
\sup_{\|W\|_{L^\infty}=1} \big |\int \bigl (\mathcal{N}_{f_j}
(V_j)-\mathcal{N}_f(V_j)\bigr )W \,dx \big |=
\nonumber\\
& &
\sup_{\|W\|_{L^\infty}=1}\big |{2}\tr\left(\left(f_j(H_{V_j} -\mu_{f_j}(V_j))-
f(H_{V_j} -\mu_f(V_j))\right)W\right)
\big | \le
\nonumber \\
& &
{2}|f_{j}(H_{V_j} -\mu_{f_j}(V_j))- f(H_{V_j} -\mu_f(V_j))\|_{\mathcal B_1}\le
\nonumber \\
& &
{2}|\bigl (f_j(H_{V_j} -\mu_{f_j}(V_j))-f(H_{V_j} -\mu_{f_j}(V_j))\bigr )
(H_{V_j} -\rho)\|_{\mathcal B} \|(H_{V_j}-\rho)^{-1}\|_{\mathcal B_1},
\nonumber
\end{eqnarray*}
where $\rho  < \rho_\mathcal{C}$, cf. Lemma \ref{lem-resolvent-continuity}.
This leads to the estimate 
\begin{eqnarray*}
\lefteqn{
\|\mathcal{N}_{f_j}(V_j)- \mathcal{N}_f(V_{j})\|_{L^1} \le}\\
& &
{2}|\bigl (f_j(H_{V_j} -\mu_{f_j}(V_j))-f(H_{V_j} -\mu_{f_j}(V_j))\bigr )
(H_{V_j} -\rho)\|_{\mathcal B} 
\|(H_{V_j}-\rho)^{-1/2}\|^2_{\mathcal B_2}, 
\nonumber 
\end{eqnarray*}
Using (\ref{HV-H0-ineq}), one estimates the second factor by
\begin{eqnarray*}
\lefteqn{
\|(H_{V_j}-\rho)^{-1/2}\|_{\mathcal{B}_2} \le }\\
& &
\|(H_{V_j}-\rho)^{-1/2}(H_0 - \rho)^{1/2}\|_{\mathcal{B}}\|(H_0 - \rho)^{-1/2}\|_{\mathcal{B}_2}
\le \sqrt{2}\|(H_0 - \rho)^{-1/2}\|_{\mathcal{B}_2}.
\end{eqnarray*}
To estimate the first one, we write
\begin{eqnarray}
\lefteqn{
\|\bigl (f_j(H_{V_j} -\mu_{f_j}(V_j))-f(H_{V_j} -\mu_f(V_j))\bigr )
(H_{V_j} -\rho)\|_{\mathcal B} \le }\nonumber\\
& &
\sup_{t \in [\inf_j \spec (H_{V_j}),\infty[}
|(f_j(t -\mathcal E_{f_j}(V_j))-f(t -\mathcal E_f(V_j))|(t -\rho) \le
\nonumber\\
& &
\sup_{t \in [\inf_j \spec (H_{V_j}),\infty[}
|(f_j(t -\mu_{f_j}(V_j))-f(t -\mu_{f_j}(V_j))|(t -\rho) +
\label{e-absch000}\\
& &
\sup_{t \in [\inf_j \spec (H_{V_j}),\infty[}
|(f(t -\mu_{f_j}(V_j))-f(t -\mu_f(V_j))|(t -\rho).
\label{e-absch0000}
\end{eqnarray} 
The term (\ref{e-absch000}) converges to zero due to
(\ref{e-convweight}) and $\mathcal{C} \subseteq \mathcal{D}_2$. 

We consider the term (\ref{e-absch0000}). \lemref{l-fermi} yields that $\{\mu_{f_j}(V_j)\}^\infty_{j=1}$ 
and $\{\mu_f(V_j)\}^\infty_{j=1}$ converge to $\mu_f(V)$. Thus, the restriction of 
$f(\cdot -\mu_{f_j}(V_j))-f(\cdot -\mu_f(V_j))$ to finite intervals uniformly converges to
zero  by the continuity of $f$. On the other hand, $f$ decays at $\infty$ as $\frac {1}{t^2}$, thus,
 for large arguments $t$ the absolute value of $f(t -\mu_{f_j}(V_j))-f(t -\mu_f(V_j))$ becomes
 arbitrarily small uniformly in $j$. This altogether shows that (\ref{e-absch0000}) goes to zero.
\end{proof}
\begin{cor}\label{VII.2}
Let 
Assumptions \ref{assu-diel}, \ref{assu-dot}, \ref{assu-effekt}, \ref{assu-fermi} and 
\ref{assu-xc} be satisfied. Further, let $\mathcal{C}:=\{f_j\}^\infty_{j=1} \subset
\mathcal{D}_2$, cf. Theorem \ref{t-W12-absch}, and $f \in \mathcal{D}$ such that condition 
(\ref{e-convweight}) holds for every $a \in ]-\infty,-1[$. If $\{u_j\}^\infty_{j=1}$, $u_j
\in L^1_N$ converges to the real function $u \in L^1_N$ in $L^1$, then 
$\lim\limits_{j\to\infty}\Phi_{f_j}(u_j) = \Phi_f(u)$ in $L^1$.
\end{cor}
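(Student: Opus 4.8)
The plan is to reduce the statement to \lemref{VII.1} together with the continuity properties of the Poisson map and the exchange--correlation map, exactly mirroring the continuity argument in the proof of \thmref{V.12}(ii). Recall that
\[
\Phi_{f_j}(u_j)=\mathcal{N}_{f_j}\bigl(\Delta E+V_{xc}(u_j)-q\varphi(u_j)\bigr),
\qquad
\Phi_f(u)=\mathcal{N}_f\bigl(\Delta E+V_{xc}(u)-q\varphi(u)\bigr),
\]
so it suffices to show that the potentials $V_j:=\Delta E+V_{xc}(u_j)-q\varphi(u_j)$ converge in $L^1$ to $V:=\Delta E+V_{xc}(u)-q\varphi(u)$, and then invoke \lemref{VII.1} with the sequence $\{f_j\}$ and the limit $f$.

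First I would establish the convergence $V_j\to V$ in $L^1$. Since $u_j\to u$ in $L^1$ and $L^1\hookrightarrow W^{-1,2}$, formula (\ref{4.6}) together with the continuity of $\mathcal{P}^{-1}:W^{-1,2}\to W^{1,2}_0$ gives $\varphi(u_j)\to\varphi(u)$ in $W^{1,2}_0$, hence also in $L^1$ by the embedding $W^{1,2}_0\hookrightarrow L^\infty\hookrightarrow L^1$. By \assuref{assu-xc}b), the map $u\mapsto V_{xc}(u)$ is continuous from $L^1$ into $L^1$, so $V_{xc}(u_j)\to V_{xc}(u)$ in $L^1$. Adding the fixed term $\Delta E$ and combining, $V_j\to V$ in $L^1$, as required. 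I would also note that each $V_j$ is real-valued and in $L^1$, since $\Delta E\in L^1_\R$, $V_{xc}$ maps into $L^1$ (real-valued in the relevant real subspace), and $\varphi(u_j)\in W^{1,2}\subset L^1$; so the hypotheses of \lemref{VII.1} on the potentials are met.

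With $V_j\to V$ in $L^1$ established and the hypothesis $\mathcal{C}=\{f_j\}\subset\mathcal{D}_2$ with $f\in\mathcal{D}$ satisfying (\ref{e-convweight}) for every $a\in\,]-\infty,-1[$ carried over verbatim, \lemref{VII.1} applies directly and yields
\[
\lim_{j\to\infty}\mathcal{N}_{f_j}(V_j)=\mathcal{N}_f(V)\quad\text{in }L^1,
\]
which is precisely $\lim_{j\to\infty}\Phi_{f_j}(u_j)=\Phi_f(u)$ in $L^1$. There is essentially no obstacle here: the corollary is a bookkeeping composition of \lemref{VII.1} with the already-proven continuity of the Poisson solution operator and of $V_{xc}$. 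The only point deserving a word of care is confirming that the intermediate potentials stay inside the function class to which \lemref{VII.1} applies (real-valued $L^1$ functions), which follows immediately from the standing assumptions; everything quantitative has already been absorbed into \lemref{VII.1} and \thmref{t-W12-absch}.
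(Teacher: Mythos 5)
Your proof is correct and follows essentially the same route as the paper's: establish $V_j\to V$ in $L^1$ via the continuity of the Poisson solution map and of $V_{xc}$, then invoke \lemref{VII.1}. If anything, your argument is slightly more explicit than the paper's about why $V_{xc}(u_j)\to V_{xc}(u)$, which the paper leaves implicit.
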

\begin{proof}
By definition (\ref{5.15}) we have
\begin{displaymath}
\Phi_{f_j}(u_j) := \mathcal{N}_{f_j}(\Delta E +V_{xc}(u_j) \; - q\varphi(u_j)),
\quad u_j \in L^1_N, \quad j = 1,2,\ldots.\; ,
\end{displaymath}
where
\begin{displaymath}
\varphi(u_j) = \widetilde{\varphi} + \mathcal{P}^{-1}(D - qu_j).
\end{displaymath}
Since $\lim\limits_{j\to\infty}\varphi(u_j) = \varphi(u) = 
\widetilde{\varphi} + \mathcal{P}^{-1}(D - qu)$
in $W^{1,2}$ one gets that $\lim\limits_{j\to\infty}\varphi(u_j) =
\varphi(u)$ in $L^1$. Therefore $\lim\limits_{j\to\infty}V_j = V$ in $L^1$, where
\begin{displaymath}
V_j := \Delta E +V_{xc}(u_j) - q \varphi(u_j)
\quad \mbox{and} \quad 
V :=  \Delta E +V_{xc}(u) - q\varphi(u).
\end{displaymath}
Applying Lemma \ref{VII.1}, we complete the proof.
\end{proof}
\begin{thm} \label{t-conv}
Let 
Assumptions \ref{assu-diel}, \ref{assu-dot},
\ref{assu-effekt}, \ref{assu-fermi} and \ref{assu-xc} be
satisfied. Further, let $f \in \mathcal{D}$ and 
$\mathcal{C}:=\{f_j\}^\infty_{j=1} \subseteq \mathcal{D}_2$,
cf. Theorem \ref{t-W12-absch} such that
$\mathcal{C}:=\{f_j\}^\infty_{j=1}$ obeys (\ref{e-convweight})
for every $a \in ]-\infty,-1[$. 
If $\{\{\varphi_j,u_j\}\}^\infty_{j=1}$ are solutions of the Kohn-Sham system with respect
 to the distribution function $f_j$, then there is a subsequence 
$\left\{\{\varphi_k,u_k\}\right\}^\infty_{k=1}$ which converges in $L^\infty \times L^1$ to a
 solution $\{\varphi,u\}$ of the Kohn-Sham system with distribution function $f$.
\end{thm}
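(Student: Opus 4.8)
The plan is to extract a convergent subsequence of the densities $\{u_j\}$ and show the limit is a fixed point of $\Phi_f$. First I would invoke \thmref{t-W12-absch}(i): since $\mathcal{C}\subseteq\mathcal{D}_2$, the densities $u_j = \Phi_{f_j}(u_j)$ form a set which is bounded in $W^{1,2}$ uniformly in $j$, because the bound in \thmref{t-W12-absch}(i) is uniform over $\mathcal{D}_2$ (the potentials $\Delta E + V_{xc}(u_j) - q\varphi(u_j)$ are $L^1$-bounded, as $\{u_j\}\subset L^1_N$ is bounded, $V_{xc}$ is bounded by \assuref{assu-xc}, and $\varphi(u_j)=\widetilde{\varphi}+\mathcal{P}^{-1}(D-qu_j)$ stays in a bounded subset of $W^{1,2}$). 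By the compact embedding $W^{1,2}\hookrightarrow L^1$ there is a subsequence $\{u_k\}$ converging in $L^1$ to some $u$; since $L^1_N$ is closed in $L^1$, we have $u\in L^1_N$.

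Next I would pass to the limit in the fixed-point equation $u_k = \Phi_{f_k}(u_k)$. The left-hand side converges to $u$ in $L^1$ by construction. For the right-hand side I would apply \corref{VII.2}: the hypotheses of that corollary are exactly the standing assumptions together with $\mathcal{C}\subseteq\mathcal{D}_2$ and condition (\ref{e-convweight}), and $u_k\to u$ in $L^1$ with $u_k,u\in L^1_N$, so $\Phi_{f_k}(u_k)\to\Phi_f(u)$ in $L^1$. Hence $u=\Phi_f(u)$, i.e.\ $u$ is a fixed point of $\Phi_f$, and by \thmref{V.12}(i) the pair $\{\varphi(u),u\}$ with $\varphi(u)$ given by (\ref{4.6}) is a solution of the Kohn-Sham system with distribution function $f$.

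It remains to upgrade the convergence of the potentials to $L^\infty$. Set $\varphi_k:=\varphi(u_k)=\widetilde{\varphi}+\mathcal{P}^{-1}(D-qu_k)$; since $\mathcal{P}^{-1}:W^{-1,2}\to W^{1,2}_0$ is continuous and $u_k\to u$ in $L^1\hookrightarrow W^{-1,2}$, we get $\varphi_k\to\varphi(u)$ in $W^{1,2}$, and then in $L^\infty$ by the embedding $W^{1,2}_0\hookrightarrow L^\infty$ recorded in the excerpt (applied to $\varphi_k-\varphi(u)$, which lies in $W^{1,2}_0$). Thus $\{\varphi_k,u_k\}\to\{\varphi(u),u\}$ in $L^\infty\times L^1$, which is the claimed convergence.

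The only subtle point is the bookkeeping of uniformity: one must check that the single subsequence chosen from the $W^{1,2}$-compactness argument simultaneously makes \corref{VII.2} applicable, which it does since that corollary only requires $L^1$-convergence of the $u_k$ inside $L^1_N$ and the fixed hypotheses on $\mathcal{C}$ and $f$; there is no circular dependence. The main obstacle is really just ensuring the $W^{1,2}$-bound of \thmref{t-W12-absch}(i) is invoked with the correct uniform-in-$j$ input, i.e.\ that $\{\Delta E + V_{xc}(u_j) - q\varphi(u_j)\}_j$ is $L^1$-bounded and $\{f_j\}\subset\mathcal{D}_2$; everything else is a routine limit passage through the continuous maps already established.
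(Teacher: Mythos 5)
Your proof is correct and follows essentially the same route as the paper: uniform $W^{1,2}$-bound on the densities via Theorem \ref{t-W12-absch}(i) and $\mathcal{C}\subseteq\mathcal{D}_2$, compact embedding to extract an $L^1$-convergent subsequence in $L^1_N$, Corollary \ref{VII.2} to pass to the limit in the fixed-point equation, and Theorem \ref{V.12} to identify the limit pair as a solution. If anything your treatment of the potentials is slightly cleaner, since you observe directly that $\varphi_k\to\varphi(u)$ strongly in $W^{1,2}$ (and hence in $L^\infty$) via continuity of $\mathcal{P}^{-1}$, whereas the paper only records weak $W^{1,2}$-convergence before concluding $L^\infty$-convergence.
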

\begin{proof}
By Theorem \ref{V.12} $\{\varphi_j,u_j\} \in W^{1,2} \times L^1$ is a solution of the
Kohn-Sham system with respect to the distribution function $f_j$ if
and only if $u_j = \Phi_{f_j}(u_j)$, $j = 1,2,\ldots$ and the corresponding
potential is given by $\varphi_j = \varphi(u_j) = \widetilde{\varphi}
+ \mathcal{P}^{-1}(D -  qu_j)$. According to \thmref{t-W12-absch} 
there are subsequences  $\{u_k\}^\infty_{k=1}$ and
$\{\varphi_k\}^\infty_{k=1}$ such that the following properties are satisfied:
\begin{enumerate}

\item[$\bullet$] The sequence  $\{u_k\}^\infty_k$ is bounded in
$W^{1,2}$, obeys $u_k \in L^1_N$ and converges in $L^1$ to an element $u \in L^1_N$.

\item[$\bullet$] The sequence of potentials $\{\varphi_k\}^\infty_{k=1}$ converges 
in $L^1$ and, additionally, weakly in $W^{1,2}$ to an element $\varphi$.\\[-2mm]
 
\end{enumerate}
By Theorem \ref{V.12} the pair $\{\varphi_k,u_k\}$ is a solution of
the Kohn-Sham system 
with distribution function $f_k$ if and only if $u_k$ is a fixed point of the map $\Phi_k$, i.e.
\begin{displaymath}
u_k = \Phi_{f_k}(u_k), \qquad k = 1,2,\ldots\;.
\end{displaymath}
and the potential $\varphi_k$ is given by
\begin{displaymath}
\varphi_k = \widetilde{\varphi} + \mathcal{P}^{-1}(D - qu_k).
\end{displaymath}
By $\lim\limits_{k\to\infty}u_k = u$ in $L^1$ and Corollary \ref{VII.2}
we get $u = \Phi_f(u)$ for $u \in L^1$. This shows that $u$ is a fixed point of $\Phi_f$.
 Moreover, one has $\varphi = \widetilde{\varphi} + \lim\limits_{k\to\infty}\mathcal{P}^{-1}(D - q u_k)$ in $L^\infty$ which shows that $\varphi \in W^{1,2}$. 
By Theorem \ref{V.12},
the pair $\{\varphi,u\}$ is a solution of the Kohn-Sham system with distribution function~$f$.
\end{proof}
If the Kohn-Sham system with distribution function $f$ has several solutions, then it
remains unclear to which of them a sequence of solutions of Kohn-Sham
systems with distributions functions $f_j$ converges. However, if
the exchange correlation term is absent, then the result can be
improved. 
\begin{cor} \label{c-appr}
Let the assumptions of Theorem \ref{t-conv} be satisfied. 
If the exchange correlation term $V_{xc}$ is absent
and if $\{\{\varphi_j,u_j\}\}^\infty_{j=1}$ are unique solutions of
Kohn-Sham systems with distribution function $f_j$, then 
$\left\{\{\varphi_j,u_j\}\right\}^\infty_{k=1}$  converges in
$L^\infty \times L^1$ to the unique solution $\{\varphi,u\}$ of the
Kohn-Sham system with distribution function $f$.
\end{cor}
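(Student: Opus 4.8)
The plan is to combine the subsequence convergence from Theorem \ref{t-conv} with the uniqueness statement of Theorem \ref{c-unique} via a standard subsequence argument: if every subsequence of a sequence has a further subsequence converging to the same limit $\{\varphi,u\}$, then the whole sequence converges to $\{\varphi,u\}$. First I would invoke Theorem \ref{c-unique}, whose hypotheses are satisfied under the present assumptions, to conclude that the Kohn-Sham system with distribution function $f$ has a unique solution; call it $\{\varphi,u\}$. The solutions $\{\varphi_j,u_j\}$ with distribution function $f_j$ exist and are unique by the same theorem applied to each $f_j \in \mathcal{D}_2 \subset \mathcal{D}$, so the sequence $\{\{\varphi_j,u_j\}\}_{j}$ is well defined.

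Next I would run the subsequence trick. Take an arbitrary subsequence $\{\{\varphi_{j_i},u_{j_i}\}\}_{i}$. Since the distribution functions $\{f_{j_i}\}_i$ still form a subset of $\mathcal{D}_2$ obeying (\ref{e-convweight}) for every $a \in ]-\infty,-1[$, Theorem \ref{t-conv} applies to this subsequence and yields a further subsequence that converges in $L^\infty \times L^1$ to \emph{a} solution of the Kohn-Sham system with distribution function $f$. By the uniqueness just established, that limiting solution must be $\{\varphi,u\}$. Thus every subsequence of $\{\{\varphi_j,u_j\}\}_{j}$ has a further subsequence converging in $L^\infty \times L^1$ to the fixed limit $\{\varphi,u\}$, which forces $\lim_{j\to\infty}\{\varphi_j,u_j\} = \{\varphi,u\}$ in $L^\infty \times L^1$. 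This is the assertion.

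The argument is essentially routine once Theorems \ref{t-conv} and \ref{c-unique} are in place; the only point requiring a moment's care is checking that the hypotheses of Theorem \ref{t-conv} are inherited by an arbitrary subsequence $\{f_{j_i}\}_i$ — but this is immediate, since membership in $\mathcal{D}_2$ and the convergence condition (\ref{e-convweight}) are properties of the individual functions and of the limit $f$, both preserved under passing to subsequences. There is no real obstacle here: the compactness work was already done in Theorem \ref{t-W12-absch} and absorbed into Theorem \ref{t-conv}, and the uniqueness (via strong monotonicity of $-\frac{d}{dx}\varepsilon\frac{d}{dx} + q\mathcal{N}_f$) was handled in Theorem \ref{c-unique}; this corollary merely upgrades "a convergent subsequence exists" to "the full sequence converges" by exploiting that the limit is forced.
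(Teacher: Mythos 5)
Your argument is correct and is essentially the same as the paper's: the paper phrases it as a proof by contradiction (assume no convergence, extract a subsequence with a different limit, apply Theorem \ref{t-conv} plus uniqueness to get a contradiction), while you run the equivalent direct version of the subsequence principle. Both rest on exactly the same two ingredients — subsequence compactness from Theorem \ref{t-conv} and uniqueness from Theorem \ref{c-unique} — so no substantive difference.
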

\begin{proof}
Assume that the sequence $\{\{\varphi_j,u_j\}\}^\infty_{j=1}$ does not converge to $\{\varphi,u\}$.
 In this case there is a subsequence $\{\{\varphi_k,u_k\}\}^\infty_{k=1}$ converging in
$L^\infty \times L^1$ to an element $\{\tilde \varphi,\tilde u\} \in L^\infty
 \times L^1$ which is different from $\{\varphi,u\}$. However, by Theorem
\ref{t-conv} the pair $\{\tilde \varphi,\tilde u\}$ is a solution of the Kohn-Sham
system with distribution function $f$. Since this Kohn-Sham system admits
only one solution the solution $\{\tilde \varphi,\tilde u\}$ coincides with
$\{\varphi,u\}$ -- what is a contradiction.
\end{proof}
\begin{lem}  \label{l-monotonT}
The function $f_\beta(x)=\frac {1}{\beta} \ln(1+e^{-\beta{x}})$, $x
\in \mathbb{R}$, strictly decreases in $\beta \in ]0,\infty[$.
\end{lem}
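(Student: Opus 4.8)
The plan is to fix $x \in \mathbb{R}$ and show that the function $g(\beta) := \frac{1}{\beta}\ln(1+e^{-\beta x})$ is strictly decreasing in $\beta$ on $(0,\infty)$ by computing $g'(\beta)$ and showing it is negative. Writing $g(\beta) = \beta^{-1} h(\beta)$ with $h(\beta) := \ln(1+e^{-\beta x})$, the quotient/product rule gives
\begin{displaymath}
g'(\beta) = \frac{\beta h'(\beta) - h(\beta)}{\beta^2},
\end{displaymath}
so it suffices to prove $\beta h'(\beta) - h(\beta) < 0$ for all $\beta>0$. Since $h'(\beta) = \frac{-x e^{-\beta x}}{1+e^{-\beta x}} = \frac{-x}{1+e^{\beta x}}$, the task reduces to the inequality
\begin{displaymath}
\frac{-\beta x}{1+e^{\beta x}} < \ln(1+e^{-\beta x}), \qquad \beta > 0.
\end{displaymath}

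Next I would reduce this to a one-variable statement by substituting $y := \beta x \in \mathbb{R}$ (the case $x=0$ is trivial since then $g(\beta)=\frac{1}{\beta}\ln 2$ is strictly decreasing directly, so assume $x\neq 0$, and as $\beta$ ranges over $(0,\infty)$, $y$ ranges over $(0,\infty)$ or $(-\infty,0)$). Thus I must show $\varphi(y) := \ln(1+e^{-y}) + \frac{y}{1+e^{y}} > 0$ for all $y \neq 0$. One checks $\varphi(0) = \ln 2 > 0$ — actually I want strict positivity away from $0$, and more useful: I would show $\varphi$ attains its minimum and is strictly positive. Compute $\varphi'(y) = \frac{-e^{-y}}{1+e^{-y}} + \frac{(1+e^y) - y e^y}{(1+e^y)^2} = \frac{-1}{1+e^y} + \frac{1 + e^y - y e^y}{(1+e^y)^2} = \frac{-y e^y}{(1+e^y)^2}$. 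Hence $\varphi'(y) > 0$ for $y<0$ and $\varphi'(y) < 0$ for $y>0$, so $\varphi$ is strictly increasing on $(-\infty,0)$ and strictly decreasing on $(0,\infty)$, with a strict maximum at $y=0$. That is the wrong direction for a direct "$\varphi > 0$" conclusion from monotonicity alone, so I would instead check the limits: as $y \to +\infty$, $\ln(1+e^{-y}) \to 0$ and $\frac{y}{1+e^y} \to 0$, so $\varphi(y) \to 0^+$; as $y \to -\infty$, $\ln(1+e^{-y}) = -y + \ln(1+e^{y}) \to +\infty$ while $\frac{y}{1+e^y} \to 0$, so $\varphi(y) \to +\infty$. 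Since $\varphi$ is continuous, strictly decreasing on $(0,\infty)$ with limit $0$ at $+\infty$, it is strictly positive on $[0,\infty)$; and strictly increasing on $(-\infty,0]$ from $+\infty$ up to $\varphi(0) = \ln 2$, hence strictly positive on $(-\infty,0]$ as well. Therefore $\varphi(y) > 0$ for every $y \in \mathbb{R}$, which yields $g'(\beta) < 0$ for every $\beta>0$ and $x \neq 0$.

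The main obstacle is purely the bookkeeping in the derivative computation of $\varphi$ (simplifying $\frac{1+e^y-ye^y}{(1+e^y)^2} - \frac{1}{1+e^y}$ to $\frac{-ye^y}{(1+e^y)^2}$), after which everything follows from sign analysis and the two limits; there is no conceptual difficulty. One should also note the degenerate case $x=0$ separately, as done above, and remark that the substitution $y=\beta x$ is a strictly monotone reparametrization so that strict monotonicity in $\beta$ transfers correctly. Finally, I would observe that this $f_\beta$ is exactly (up to the constant $\frac{2m_\perp}{2\pi\hbar^2}$) the weight $kT\ln(1+e^{(\mu-\lambda_l)/kT})$ appearing in the finite-temperature particle density \eqref{e-einsetz}, so the lemma says the occupation weight decreases as temperature increases, which is what will be used in the convergence-to-zero-temperature arguments.
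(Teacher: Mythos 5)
Your first step matches the paper's: compute $\partial_\beta f_\beta$, substitute $y=\beta x$, and reduce to the positivity of $\varphi(y)=\ln(1+e^{-y})+\frac{y}{1+e^y}$. (The paper writes it as the inequality $\frac{\gamma e^\gamma}{1+e^\gamma}<\ln(1+e^\gamma)$ for $\gamma=-\beta x>0$, which is your $\varphi(-\gamma)>0$.) So the strategy is the same up to this point.

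However, your limit computation for $y\to-\infty$ is wrong, and it makes the argument for negative $y$ internally inconsistent. You claim $\frac{y}{1+e^y}\to 0$ as $y\to-\infty$; in fact $e^y\to 0$ there, so $\frac{y}{1+e^y}\to-\infty$. A more careful expansion gives $\varphi(y)=\bigl(-y+\ln(1+e^y)\bigr)+\frac{y}{1+e^y}=e^y(1-y)+O(e^{2y})\to 0^+$, not $+\infty$. You should have noticed the inconsistency yourself: your (correct) derivative formula $\varphi'(y)=\frac{-ye^y}{(1+e^y)^2}$ shows $\varphi$ is \emph{strictly increasing} on $(-\infty,0)$ with $\varphi(0)=\ln 2$, which is incompatible with $\varphi\to+\infty$ at $-\infty$. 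The good news is that the correct limit ($0$) still gives what you need: $\varphi$ increases from $0$ to $\ln 2$ on $(-\infty,0]$, hence is strictly positive there, and the $[0,\infty)$ half of your argument is fine. So the conclusion survives once the slip is fixed.

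For comparison, the paper avoids both the substitution on the $x\ge 0$ side (there the two terms of $\partial_\beta f_\beta$ are already manifestly nonpositive, with the first strictly negative) and any limit analysis on the $x<0$ side: it disposes of the reduced inequality in one line via
\begin{displaymath}
\frac{\gamma e^\gamma}{1+e^\gamma} < \gamma = \ln\bigl(e^\gamma\bigr) < \ln\bigl(1+e^\gamma\bigr), \qquad \gamma>0.
\end{displaymath}
This is shorter and sidesteps the second derivative computation and the asymptotics entirely; you may want to adopt it.
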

\begin{proof}
One calculates
\[
\frac {\partial}{\partial \beta}f_\beta (x) =-\frac {1}{\beta^2} \ln(1+e^{-\beta{x}}) 
-\frac {x}{\beta}\frac {e^{-\beta x}}{1+e^{-\beta x}},
\] 
what immediately shows the assertion for $x \ge 0$. Putting $-\beta x=:\gamma$, the assertion
for negative $x$ is equivalent to 
\begin{displaymath}
\frac {\gamma e^\gamma}{1+e^\gamma} < \ln \bigl(1+e^\gamma \bigr )\, ,
\quad \gamma >0, 
\end{displaymath}
which follows from $\frac {\gamma e^\gamma}{1+e^\gamma} < \gamma =\ln \bigl (e^\gamma \bigr ) <
\ln \bigl (1+e^\gamma \bigr )$.
\end{proof}

In order to apply this to the Kohn-Sham system at zero temperature, we show in the following
that the corresponding distribution function satisfies the
condition (\ref{e-convweight}).
\begin{lem} \label{l-convdistr}
Let $\{T_j\}^\infty_{j=1}$ be any positive 
sequence converging to zero. We set 
$f_j(x) := \frac{1}{\beta_j}\ln(1+e^{-\beta_j x})$ where $\beta_j = \frac{1}{kT_j}$.
Further, we set
\begin{equation}
f(x) := \left\{
\begin{array}{r@{\quad:\quad}l}
-x & x \le 0\\
0 & x > 0.
\end{array} \right.
\end{equation}
Then condition (\ref{e-convweight}) is satisfied.
\end{lem}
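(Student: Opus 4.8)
The plan is to verify condition \eqref{e-convweight} directly: fix $a \in ]-\infty,-1[$ and show that
\[
\sup_{x \in [a,\infty[}\bigl|f_j(x)-f(x)\bigr|\,\theta(x) \longrightarrow 0
\quad\text{as } j \to \infty.
\]
I would split the supremum into two regimes according to the sign of $x$, because the target function $f$ has different shapes there and the weight $\theta$ equals $1$ on $[a,0]$ but grows on $]1,\infty[$.

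First I would treat $x \le 0$. Here $\theta(x)=1$ on $[a,0]$ and $f(x)=-x$, so one must control $|f_j(x)-(-x)| = \bigl|\tfrac{1}{\beta_j}\ln(1+e^{-\beta_j x}) + x\bigr|$. Writing $\tfrac1{\beta_j}\ln(1+e^{-\beta_j x}) = -x + \tfrac1{\beta_j}\ln(1+e^{\beta_j x})$, the error reduces to $\tfrac1{\beta_j}\ln(1+e^{\beta_j x})$, which for $x \le 0$ is bounded by $\tfrac1{\beta_j}e^{\beta_j x} \le \tfrac1{\beta_j}$ (using $\ln(1+t)\le t$ and $e^{\beta_j x}\le 1$). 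Since $\beta_j = 1/(kT_j) \to \infty$, this goes to zero uniformly on $]-\infty,0]$ — in particular on $[a,0]$. Next, for $x \in [0,1]$ one has $\theta(x)=1$, $f(x)=0$, and $0 \le f_j(x) = \tfrac1{\beta_j}\ln(1+e^{-\beta_j x}) \le \tfrac1{\beta_j}\ln 2$, which again tends to zero uniformly.

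The remaining and only slightly delicate regime is $x > 1$, where $\theta(x)=x$ and $f(x)=0$, so one must show $\sup_{x>1} x\,f_j(x) = \sup_{x>1} \tfrac{x}{\beta_j}\ln(1+e^{-\beta_j x}) \to 0$. Using $\ln(1+t)\le t$ gives $x f_j(x) \le \tfrac{x}{\beta_j}e^{-\beta_j x}$; the function $x \mapsto x e^{-\beta_j x}$ is decreasing on $[1,\infty[$ once $\beta_j > 1$, so its supremum over $x > 1$ is $e^{-\beta_j}$, whence $\sup_{x>1} x f_j(x) \le \tfrac{1}{\beta_j}e^{-\beta_j} \to 0$. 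Combining the three regimes, the full supremum over $[a,\infty[$ is dominated by a quantity tending to zero, which is \eqref{e-convweight}.

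The only genuine obstacle is making sure the exponential decay in the $x>1$ regime actually beats the linear weight $\theta(x)=x$ uniformly in $x$; this is handled cleanly by the elementary monotonicity of $x e^{-\beta_j x}$ for $\beta_j$ large, so no real difficulty arises. (As a sanity check one may also note $f_j \in \mathcal{D}_2$ and $f \in \mathcal{D}$ in the sense required by \thmref{t-W12-absch}, since $f_j(t)t(t+1)$ is bounded uniformly in $j$ by the same exponential estimate and $f_j(a)=\tfrac1{\beta_j}\ln(1+e^{-\beta_j a})$ stays bounded below for, say, $a<0$ fixed — but this is not needed for the present lemma, only for its later application.)
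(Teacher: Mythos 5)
Your proof is correct, and it diverges from the paper's in one interesting place. The tail estimate for $x\ge 1$ is essentially identical to the paper's: both use $\ln(1+t)\le t$ to dominate $x\,f_j(x)$ by $\tfrac{x}{\beta_j}e^{-\beta_j x}$, which is killed by the exponential. The difference is in the bounded region $[a,1]$. The paper only establishes pointwise convergence $f_j\to f$ there and then upgrades it to uniform convergence on compacts via Dini's theorem, which in turn requires the monotonicity of $\beta\mapsto f_\beta(x)$ proved in the preceding Lemma \ref{l-monotonT}. You instead give a fully explicit uniform bound: on $x\le 0$ the algebraic identity $\tfrac1{\beta_j}\ln(1+e^{-\beta_j x}) = -x + \tfrac1{\beta_j}\ln(1+e^{\beta_j x})$ reduces the error to $\tfrac1{\beta_j}\ln(1+e^{\beta_j x})\le \tfrac1{\beta_j}$, and on $[0,1]$ you bound $f_j$ directly by $\tfrac{\ln 2}{\beta_j}$. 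This makes Lemma \ref{l-monotonT} and Dini's theorem unnecessary for the present lemma, so your route is more elementary and self-contained; the paper's route pays for Dini with the monotonicity lemma, which it needs anyway for other purposes, so neither is objectively better, but your estimate gives an explicit $O(1/\beta_j)$ rate that the Dini argument does not.
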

\begin{proof}
We have
\begin{eqnarray}
\lefteqn{
\lim_{j \mapsto \infty}\sup_{x \in [a,\infty[}|f_j(x)-f(x)|\theta(x)
\le }\nonumber\\
& &
\lim_{j \mapsto \infty}\sup_{a \le x \le 1}|f_j(x)-f(x)| +
\lim_{j \mapsto \infty}\sup_{x \ge 1}|f_j(x)-f(x)|x
\label{7.7}
\end{eqnarray}
for $a \le -1$. Obviously we have
\begin{displaymath}
\frac {1}{\beta}\ln(1+e^{-\beta x})x \le 
\frac {e^{-\beta x}x}{\beta}, 
\qquad x \ge 1.
\end{displaymath}
This shows that the second term of (\ref{7.7}) tends to zero as $j \to
\infty$. Further, it is almost obvious that $f_j(x)$ converges pointwise
to the continuous function $f$ for $j \mapsto \infty$. Because the family
 $\{f_{\beta}\}_\beta$ is monotonously decreasing in $\beta$ by the preceding lemma,
 the convergence is uniform on bounded intervals by Dini's
 theorem. This proves that the first term of (\ref{7.7}) tends to zero
 as $j \to \infty$.
\end{proof}

\section*{Acknowledgment}
H.C., H.N. and J.R. acknowledge support from the Danish 
F.N.U. grant {\it  Mathematical Physics and Partial Differential Equations}. 
This work was initiated during a visit of K.H, H.N. and J.R. at the
Department of Mathematical Sciences of the Aalborg University and they are 
thankful for kind hospitality extended to them during the work on this
paper.

\end{document}